\newcommand{\be}{\begin{equation}}
\newcommand{\ee}{\end{equation}}
\newcommand{\ba}{\begin{array}}
\newcommand{\ea}{\end{array}}
\newcommand{\bea}{\begin{eqnarray}}
\newcommand{\eea}{\end{eqnarray}}
\newcommand{\calC}{{\cal C }}
\newcommand{\calL}{{\cal L }}
\newcommand{\calT}{{\cal T }}
\newcommand{\calP}{{\cal P }}
\newcommand{\calS}{{\cal S }}
\newcommand{\calG}{{\cal G }}
\newcommand{\calB}{{\cal B }}
\newcommand{\ZZ}{\mathbb{Z}}
\newcommand{\modtwo}{\; (\mathrm{mod}\;  2)}
\newcommand{\modn}[1]{\; (\mathrm{mod}\; {#1})}
\newcommand{\hexlat}{\Omega_{\rm hex}}
\newcommand{\la}{\langle}
\newcommand{\ra}{\rangle}
\newcommand{\nn}{\nonumber}
\newcommand{\trace}{\mathop{\mathrm{Tr}}\nolimits}
\newtheorem{dfn}{Definition}
\newtheorem{lemma}{Lemma}
\newtheorem{prop}{Proposition}
\newtheorem{theorem}{Theorem}
\title{Topological order in an exactly solvable 3D spin model}
\author{Sergey Bravyi\footnote{IBM Watson Research Center, Yorktown Heights, NY 10598}, \;
Bernhard Leemhuis\footnote{Institute for Theoretical Physics, University of Amsterdam, Valckenierstraat 65,
1018 XE Amsterdam}, \; and \;
Barbara M. Terhal${}^*$
}
\date{\today}
\begin{document}

\maketitle

\begin{abstract}
We study a 3D generalization of the toric code model introduced recently by Chamon.
This is an exactly solvable spin model with six-qubit nearest-neighbor interactions on an FCC lattice whose ground space exhibits topological quantum order.  The elementary excitations of this model which we call monopoles can be geometrically described as the corners of rectangular-shaped membranes. We prove that the creation of an isolated monopole separated from other monopoles by a distance $R$
requires an operator acting on $\Omega(R^2)$ qubits. Composite particles that consist of two monopoles (dipoles) and four monopoles (quadrupoles) can be described as end-points of strings. The peculiar feature of the model is that dipole-type strings are rigid, that is, such strings must be aligned with face-diagonals of the lattice.  For periodic boundary conditions the ground space can encode $4g$ qubits where $g$ is the greatest common divisor of the lattice dimensions.
We describe a complete set of  logical operators acting on the encoded qubits in terms of  closed strings and closed membranes.
 \end{abstract}

\newpage

\tableofcontents

\newpage

\section{Introduction}

In the last decade, many two-dimensional spin models which exhibit topological order have been proposed and studied. Well-known examples include the 2D toric code and quantum doubles of finite groups~\cite{kitaev:anyons},
Levin-Wen string-net models~\cite{LW:stringnet}, topological color codes~\cite{BMD:topo},
Kitaev's honeycomb model~\cite{kitaev:anyon_pert},  quantum dimer models
on the triangular lattice~\cite{MS01,IF02}, and quantum loop gases~\cite{FF05}.
The corresponding topological phases can be described by a suitable class of anyons~\cite{kitaev:anyon_pert}
which captures many essential features of a model including the superselection sectors, the ground state degeneracy, and the behavior of excitations under braiding and fusion.

Much less is known about topological order in three dimensions and a general classification of topological phases in 3D largely remains an open problem.
As a first step in this direction, a 3D generalization of 2D toric codes has been studied by several authors~\cite{HZW:3dmodels,CC:toric, Wegner71}.
The 3D toric code can be viewed as a 3D lattice $\ZZ_2$ gauge theory.
In this theory electric charges are point-like excitations that can be created by string-like operators. On the other hand, magnetic excitations correspond to closed loops of $\ZZ_2$ flux. Such excitations can be created by membrane-like operators.
The 3D toric code features a macroscopic energy barrier for the logical membrane-like $\bar{X}$ operator, but not for the string-like logical $\bar{Z}$ operator. This implies that one encode a classical bit in such system whose value will be protected at finite temperature~\cite{CC:toric}.

Other 3D models with both string and membrane-like logical operators have been analyzed in~\cite{HZW:3dmodels,Bombin07}. In these models membrane-like and string-like operators always appear in pairs, such that the two operators anticommute if the string crosses the membrane at an odd number of points. Hence, one may be led to believe that some kind of duality must be at play between the supports of the logical operators.
Such duality would say, for example, that if one logical operator of a qubit, say $\bar{X}$, is membrane-like, then the other logical operator, say, $\bar{Z}$,  must be string-like or even point-like. Such duality would first of all imply that the distance of 3D error-correcting codes would be bounded by the linear size of the system. Secondly, it would provide strong evidence for a no-go result concerning thermal stability of a 3D passive quantum memory~\cite{BT:mem}.

In this paper we analyze a spin model originally introduced by Chamon~\cite{Chamon05}
in which some of these beliefs can be examined with a greater care.
The elementary excitations of the model are point-like particles which we call monopoles. Monopoles carry a non-trivial topological charge meaning that no local operator can create an isolated monopole from the ground space.  In contrast to all previously studied 3D topological spin models, monopoles cannot be created by string-like operators. Rather, monopoles can be described as the corners of rectangular-shaped membranes. More specifically,  suppose an excited state contains an isolated monopole separated from other monopoles by a distance $R$. We prove that creating such a state starting from the ground space requires
 an operator acting on $\Omega(R^2)$ qubits.
 It implies that local errors cannot induce diffusion in a dilute gas
 of monopoles since
 only highly non-local operators can move monopoles between adjacent sites~\cite{Chamon05}.
 Composite particles that consist of two monopoles (dipoles) and four monopoles (quadrupoles) can be described as end-points of strings. The peculiar feature of the model is that dipole-type strings are rigid, that is, these strings must be aligned with face-diagonals of the lattice. The quadrupole-type strings are partially flexible, meaning that such strings can only be deformed within planes orthogonal to body-diagonals of the lattice.

The degenerate ground space of the model defines a quantum error correcting code with
$k=4g$ logical qubits, where $g$ is the greatest common divisor of the lattice dimensions.
We describe a complete set of logical operators in terms of closed strings and closed membranes
 for the special case $g=1$. The rigidity of dipole-type strings in our model leads to very interesting features of the corresponding quantum code. Specifically, if the lattice has periodic boundary conditions and the lattice dimensions are pairwise co-prime, a rigid string aligned with a face-diagonal must fully fill up a two-dimensional plane before it gets closed.  Hence closed rigid strings can become membrane-like objects. We propose a subsystem encoding of a single logical qubit in which the only relevant
 logical  operators are those associated with closed rigid strings and closed membranes. In this encoding
closed flexible (quadrupole-type) strings can only affect the gauge subsystem.
Several strategies of minimizing the weight of logical operators composed of rigid strings are discussed.

It is important to emphasize the essential difference between membranes describing  logical operators in our code and the ones in the 4D toric code~\cite{Dennis:2001}. In our model membranes must have a rectangular shape and the membrane operator creates excitations only near the corners of the rectangle. Such membranes can be increased in size without paying extra energy penalty.
In contrast, membranes in the 4D toric code may take arbitrary shapes
and the membrane operator creates excitations along the entire boundary of the membrane.
Increasing the size of such membranes costs energy growing linearly with the length of the boundary.
The presence of this ``energy barrier" is responsible for the thermal stability of 4D toric code~\cite{AHHH:4d} manifesting itself in the exponentially large (as a function of  the lattice size) relaxation time of the error-corrected logical operators.
Although our 3D model does not feature a linearly growing energy barrier similar to the 4D toric code, its dynamics towards a thermal equilibrium might be characterized by glassiness in the creation/annihilation of isolated monopoles as was argued in~\cite{Chamon05}. Such glassiness might result from the fact that local errors cannot induce diffusion of isolated monopoles.
We do not expect that the relaxation time of the error-corrected logical operators in our 3D model
will grow with the lattice size, see the discussion in Section~\ref{subs:rclosed}.

Our main technical results pertaining to the Chamon's model~\cite{Chamon05}  which we introduce formally in the next section can be summarized as follows.
\begin{itemize}
\item Complete classification of bulk excitations
\item Lower bound on the weight of operators creating isolated monopoles
\item General formula for the ground space degeneracy on a $3$-torus
\item Subsystem encoding of a qubit using only closed rigid strings
\item Proof of the zero-temperature stability
\end{itemize}

\subsection{The spin model}
We consider a 3D cubic lattice $\Lambda$ with
periodic boundary conditions and linear dimensions $L_x,L_y,L_z$, that is,
\[
\Lambda=\ZZ_{L_x} \times \ZZ_{L_y} \times \ZZ_{L_z}.
\]
We shall say that a site $u=(i,j,k)\in \Lambda$ is even (odd) iff $i+j+k$ is even (odd).
Note that the parity of a site is well-defined for periodic boundary conditions only if all lattice dimensions
 $L_x,L_y,L_z$ are even, since otherwise
a transformation like $i\to i-L_x$ could change the parity of $i+j+k$.
Let $\Lambda_{even}$ and $\Lambda_{odd}$ be the sublattices including all even and odd sites respectively.
Note that each of the sublattices $\Lambda_{even}$ and $\Lambda_{odd}$ can be identified with face-centered cubic (FCC) lattice, see Fig.~\ref{fig:FCC}.
We shall place a qubit at every {\em even} site of the lattice.
Hence the total number of qubits is
\[
n=\frac12\, L_x L_y L_z.
\]
We shall use the notations $X_u\equiv \sigma^x_u$, $Y_u\equiv \sigma^y_u$, $Z_u\equiv \sigma^z_u$ for the single-qubit Pauli operators
acting on a site $u\in \Lambda_{even}$. Let
$\hat{x}=(1,0,0)$, $\hat{y}=(0,1,0)$, and $\hat{z}=(0,0,1)$
be the basis vectors of the lattice.
For every {\em odd} site $u$ we define a stabilizer generator
\be
\label{generator}
S_u =X_{u-\hat{x}}\,  X_{u+\hat{x}}\,  Y_{u-\hat{y}}\,  Y_{u+\hat{y}} \, Z_{u-\hat{z}}\, Z_{u+\hat{z}},
\ee
see Fig.~\ref{fig:generators} and Fig.~\ref{fig:FCC}.
One can easily check that any pair of generators $S_u$, $S_v$ commute,
\[
S_u S_v =S_v S_u \quad \mbox{for all $u,v\in \Lambda_{odd}$}.
\]
Indeed, any pair $S_u$ and $S_v$ may only overlap on a single qubit or on some pair of qubits, see Fig.~\ref{fig:generators}.
In the first case $S_u$ and $S_v$ act on the shared qubit  by the same Pauli operator
and thus commute. In the second case
$S_u$ and $S_v$  anti-commute at each of the two shared qubits, and thus overall they also commute.

\begin{figure}
\centerline{
\includegraphics[height=3cm]{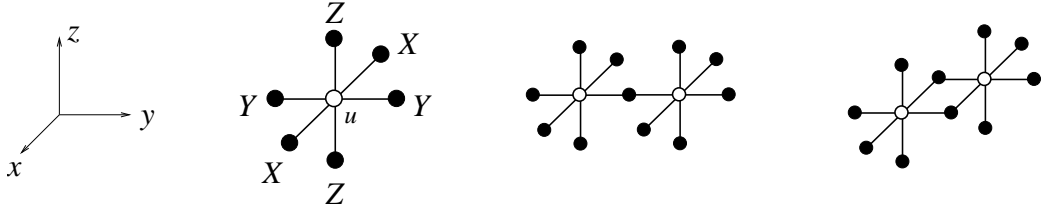}}
\caption{The generator $S_u$ and two possible ways for a pair of generators  to have overlapping supports.
Black dots indicate qubit locations. Generators $S_u$ are centered at sites indicated by open dots.}
\label{fig:generators}
\end{figure}

\begin{figure}[htb]
\centerline{
\includegraphics[height=5cm]{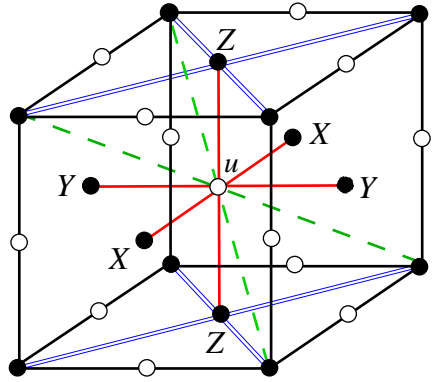}}
\caption{(Color Online) Qubits (black dots) live on the FCC lattice $\Lambda_{even}$. The stabilizer generators (red) $S_u$ are centered on the open dots $u$ in $\Lambda_{odd}$. The double (blue) lines are examples of the six {\em face-diagonals} and the two dashed (green) lines are examples of the four {\em body-diagonals}.}
\label{fig:FCC}
\end{figure}

\begin{figure}[htb]
\centerline{
\includegraphics[height=6cm]{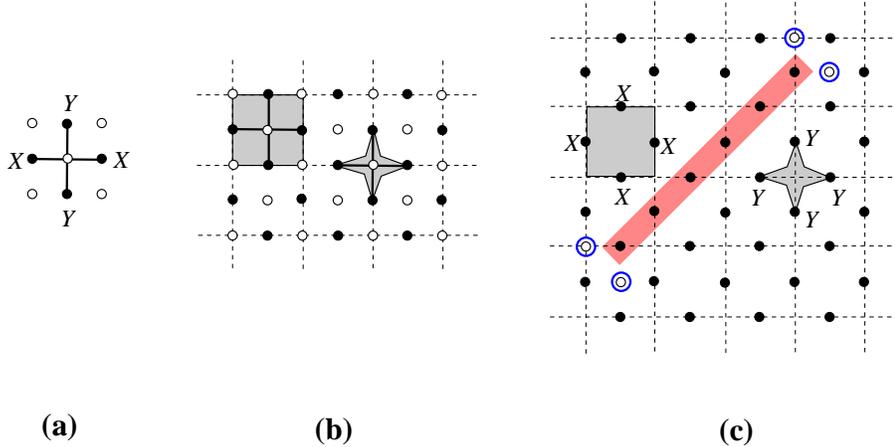}}
\caption{(a) The restriction of a generator $S_u$ onto a  horizontal plane.
(b) The correspondence between even sites lying in the chosen plane
and links of the toric code lattice. (c) The change of basis
$X\to Y$, $Y\to X$, $Z\to -Z$ applied to every horizontal link
maps the in-plane generators $S_u$ to the plaquette ($X$-type) and star ($Y$-type)
operators of the toric code.  A string-like operator
comprised only of Pauli $Z$'s
can commute with both plaquette and star operators only if it follows the diagonal of the lattice
(the pink shaded region).
 Strings that contain $X$ or $Y$ Paulis fail to commute
with the generators $S_u$ located in the adjacent horizontal planes. Hence extending the toric code
to 3D leads to rigidity of strings. Double circles indicate  excitations created near the end-point of the string.}
\label{fig:toric}
\end{figure}

The generators $S_u$ are used to define a local term-wise commuting Hamiltonian~\cite{Chamon05}
\be
\label{H}
H=-\sum_{u\in \Lambda_{odd}}\, S_u.
\ee
This model can be viewed as a natural 3D extension of the 2D toric code Hamiltonian.
We illustrate the correspondence between the two models in Fig.~\ref{fig:toric}.
Our first result concerns the ground state energy and the ground space degeneracy of $H$.
Note that the pairwise commutativity of the generators $S_u$ by itself does not guarantee
that one can minimize all terms in $H$ simultaneously, for example, if some
product of the generators gives $-I$. Before stating the result let us introduce some notations.
For any integers $p,q,r$ we shall use notation $\mathrm{gcd}(p,q,r)$ for the greatest
common divisor of $p,q,r$, that is, the largest integer that divides each of $p,q,r$.
We shall parameterize the lattice dimensions as
\[
L_x=2p_x, \quad L_y=2p_y, \quad L_z=2p_z.
\]
Define the subspace
\[
\calL=\{ |\psi\ra \, : \, S_u\, |\psi\ra =|\psi\ra \quad \mbox{for all $u\in \Lambda_{odd}$}\}.
\]
In Appendix~\ref{sec:appA} we prove the following:
\begin{theorem}
\label{thm:1}
The subspace $\calL$ has dimension $2^{4g}$, where $g={\mathrm{gcd}}(p_x,p_y,p_z)$.
\end{theorem}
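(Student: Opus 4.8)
The plan is to treat $H$ as a stabilizer code and reduce the computation of $\dim\calL$ to a dimension count for the space of relations among the generators. Since a qubit lives on every even site and a generator $S_u$ sits on every odd site, the number of generators equals the number of qubits, $n=\frac12 L_xL_yL_z$. For a commuting family of Hermitian Pauli operators the common $+1$-eigenspace has dimension $2^{n-r}$, where $r$ is the number of independent generators, provided no product of generators equals $-I$ (otherwise $\calL=\{0\}$). Writing $\mathcal{R}\subseteq\mathbb{F}_2^{\Lambda_{odd}}$ for the space of subsets $T$ with $\prod_{u\in T}S_u\propto I$, one has $r=n-\dim\mathcal{R}$; because the number of generators coincides with $n$, this gives $\dim\calL=2^{\dim\mathcal{R}}$ as soon as every such product is in fact $+I$. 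Thus the theorem splits into two tasks: (a) prove $\dim\mathcal{R}=4g$, and (b) prove that no relation produces the sign $-I$.

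For task (a) I would encode $T$ by its indicator $t:\Lambda_{odd}\to\mathbb{F}_2$ and pass to the binary symplectic representation, in which $X\mapsto(1,0)$, $Z\mapsto(0,1)$, $Y\mapsto(1,1)$. A short calculation shows that $\prod_{u\in T}S_u\propto I$ holds exactly when, at every even site $v$, the on-site symplectic vector $(a_v+b_v,\,b_v+c_v)$ vanishes, i.e. the three neighbour sums
\[
a_v=t(v+\hat{x})+t(v-\hat{x}),\quad b_v=t(v+\hat{y})+t(v-\hat{y}),\quad c_v=t(v+\hat{z})+t(v-\hat{z})
\]
are all equal. Hence $\mathcal{R}$ is the solution space of the translation-invariant $\mathbb{F}_2$-linear system $a_v=b_v=c_v$ over all even $v$. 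The count then proceeds in two layers. The ``clean'' solutions, those with $a_v=b_v=c_v=0$, force $t(v+\hat{\alpha})=t(v-\hat{\alpha})$ along every axis line, so that $t$ is constant on each of the four odd-parity classes $(1,0,0),(0,1,0),(0,0,1),(1,1,1)$; these four class-indicators are independent and contribute the base factor $4$. The remaining solutions have the common value $e(v)\in\mathbb{F}_2$ nonzero somewhere, and solving the recurrences $t(w+2\hat{\alpha})=t(w)+e$ along the three families of axis lines forces an alternating pattern that can close up around the $3$-torus only when commensurability conditions relating $p_x,p_y,p_z$ are met.

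I would carry out this second count either by a direct cycle-consistency analysis or, more systematically, by the polynomial (Fourier) method over the even-sublattice translation group $\mathbb{Z}_{p_x}\times\mathbb{Z}_{p_y}\times\mathbb{Z}_{p_z}$ together with the residual diagonal $\mathbb{Z}_2$. The admissible nonzero patterns are governed precisely by common divisors of $p_x,p_y,p_z$, and the surviving modes should contribute an extra factor $g=\gcd(p_x,p_y,p_z)$, for a total $\dim\mathcal{R}=4g$.

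Task (b) is cleanest for the four clean relations: in $\prod_{u\in C_i}S_u$ each even qubit is touched by exactly one pair of generators carrying the same Pauli, so it contributes $X^2$, $Y^2$ or $Z^2$, and the product equals $+I$ with no accumulated phase; this already exhibits a nonzero state in $\calL$ and settles the $g=1$ case. For $g>1$ the extra commensurate relations act as $XYZ\propto iI$ on some qubits; since the whole product is Hermitian and squares to $I$, these factors necessarily combine to $\pm I$, and one must verify the overall sign is $+1$. I expect the main obstacle to be exactly this combined step: extracting the gcd from the cycle-closure/number-theoretic conditions in order to pin the dimension at $4g$ (rather than merely bounding it below by $4$), and simultaneously controlling the signs of the non-clean relations so as to conclude $-I\notin\mathcal{S}$. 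The base factor $4$ and the entire $g=1$ analysis are routine; isolating the additional $4(g-1)$ commensurate relations and confirming their trivial sign is the delicate part.
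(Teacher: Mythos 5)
Your setup is sound and in fact coincides with the paper's: you reduce $\dim\calL$ to the dimension of the relation space (the paper's $\calC$), your on-site condition $a_v=b_v=c_v$ is exactly equivalent to the paper's three parity checks Eqs.~(\ref{checksXY})--(\ref{checksYZ}), and your ``clean'' relations are precisely the paper's four repetition codes on $\Lambda_{100},\Lambda_{010},\Lambda_{001},\Lambda_{111}$, with the correct observation that their products give $+I$ exactly. However, there is a genuine gap: everything beyond the lower bound $\dim\mathcal{R}\ge 4$ is asserted rather than proved. First, the upper bound on the dimension --- even in the case $g=1$, where you claim the clean analysis ``settles'' things --- requires showing that non-clean solutions are forced to have period $2g$ in every axis direction; the paper does this in Lemma~\ref{lemma:Code} via the Chinese Remainder Theorem and B\'ezout's identity applied to the propagated parity checks Eqs.~(\ref{checksXYh})--(\ref{checksYZh}). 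Second, pinning the count at exactly $4g$ requires the isomorphism $\calC(p_x,p_y,p_z)\cong\calC(g,g,g)$, the two-dimensional count $\dim\calC(g,g)=2g$ (two free rows determine the rest, with the period-$2g$ property guaranteeing consistency around the torus), and the argument that a solution on $\ZZ_{2g}^3$ is determined by its restriction to two adjacent $xy$-planes. Your ``cycle-consistency or Fourier method'' sketch does not carry any of this out, and your recurrence $t(w+2\hat\alpha)=t(w)+e$ treats the common value $e(v)$ as a constant when it is actually a constrained function on even sites; the true non-clean basis vectors (the paper's Fig.~\ref{fig:CA}) are genuinely two-dimensional diamond-shaped patterns, not alternating patterns along lines, so the count is subtler than your description suggests.

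The second unfilled gap is the sign of the non-clean relations for $g>1$, which you correctly flag but do not address. This is not a soft verification: the paper devotes Section~\ref{subs:-I} to it, exploiting that $\epsilon(t)$ is a homomorphism on $\calC$, restricting the product to $xy$-planes to derive $\epsilon(t)=i^{|s|}$ with $s_v=t_{v-\hat{x}}\oplus t_{v+\hat{x}}$, reducing to $\epsilon(t)=(-1)^{f(t)}$ with an explicit quadratic form $f$, then using the isomorphism to $\calC(g,g,g)$ to get $f(t)=\frac{p_xp_yp_z}{g^3}f(t')$, and finally checking $f(t')=0$ on the explicit basis vectors by their $90^\circ$ rotational symmetry. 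Your Hermiticity observation only gives $\pm I$, which is exactly where the difficulty starts. So while your strategy is the right one (it is the paper's strategy), the two steps you defer --- extracting the gcd to get the exact dimension, and controlling the signs of the commensurate relations --- constitute essentially the entire content of the paper's Appendix~\ref{sec:appA}, and the proposal as written does not prove the theorem for any $g$, including $g=1$.
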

Since states from $\calL$ minimize every term in the Hamiltonian, we conclude that $\calL$ is the ground
subspace of $H$.
Theorem \ref{thm:1} shows that the ground space degeneracy of the model Eq.~(\ref{H}) is determined not only by the topology of the underlying manifold (which is always the $3$-torus),  but also by the particular lattice discretization of this manifold. Hence this model is not purely topological in the same sense as the 2D toric code or the 3D models studied in~\cite{HZW:3dmodels}.
It is worth pointing out that the choice of  lattice discretization (even/odd lattice dimensions)
does affect the ground state degeneracy for a  slightly modified 2D toric code model studied by Wen~\cite{Wen03}.
One reason why we call the model Eq.~(\ref{H}) topological is that its
ground state degeneracy  is characterized by exponential insensitivity to weak local perturbations while the spectral gap
above the ground state is stable against such perturbations. We can prove this
by explicitly checking the sufficient conditions for stability of topological phases derived in~\cite{BHM:stab},
see Appendix~\ref{sec:stability}. In particular, we show that the ground subspace $\calL$
is a quantum error correcting code with a distance growing at least linearly with the lattice size.

Our second result is a complete classification of bulk excitations of the model,
see Sections~\ref{sec:strings} and~\ref{sec:SSS}.
Since this classification does not depend on the boundary conditions, it is more natural
to work with the infinite lattice, that is, $\Lambda=\ZZ\times \ZZ\times \ZZ$.
For the infinite lattice the ground state of the model is specified by the eigenvalue equations $S_u=1$ for
all $u\in \Lambda_{odd}$. The ground state is  non-degenerate in the sense
made precise in Appendix~\ref{sec:stability}.
Since all generators $S_u$ pairwise commute and $S_u^2=I$, one can describe any
excited eigenstate of the model by specifying the eigenvalues $S_u=\pm 1$
for each generator. The elementary excitations that we call {\em monopoles}
flip the sign of a single generator.  The following theorem proved in Section~\ref{subs:mem}
shows that
creating an isolated monopole (or any excited spot that contains an odd number of  monopoles) separated from other excitations by distance
$R$ requires an operator acting on roughly $R^2$ qubits.
\begin{theorem}
\label{thm:bound}
Let $M\subseteq \Lambda_{odd}$ be any finite subset containing odd number of sites.
Let $\{P_R\}_{R\ge 1}$ be a sequence of operators
such that $P_R$ creates excitations at every site $u\in M$
and may be some other excitations separated from $M$ by distance at least $R$.
Then $P_R$ must act on $\Omega(R^2)$ qubits in the limit $R\to \infty$.
\end{theorem}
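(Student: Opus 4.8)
The plan is to reduce the statement to a weight bound for a single Pauli operator, translate the monopole pattern into a binary syndrome calculus, and then extract an area ($R^{2}$) rather than a length ($R$) bound by slicing the lattice into parallel planes, each of which carries a two-dimensional toric-code obstruction of linear size $R$. First I would reduce to the case that $P_R$ is Pauli: for a ground state $|\psi\ra$ the state $P_R|\psi\ra$ has the prescribed excitation pattern, so expanding $P_R=\sum_\alpha c_\alpha E_\alpha$ in the Pauli basis and using that the states $E_\alpha|\psi\ra$ with distinct syndromes are mutually orthogonal, every $E_\alpha$ with $c_\alpha\neq 0$ must produce exactly that syndrome while $\mathrm{supp}(E_\alpha)\subseteq\mathrm{supp}(P_R)$. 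It then suffices to bound the weight of a Pauli operator $P=\prod_v X_v^{a_v}Z_v^{b_v}$ on the even sublattice that flips precisely the generators $S_u$, $u\in M$, plus some generators at distance $\geq R$. Writing out Eq.~(\ref{generator}), the flip of $S_u$ is a fixed $\mathbb{F}_2$-linear form in the $(a_v,b_v)$ of the six neighbors of $u$; summing this form over a region $A\subseteq\Lambda_{odd}$ shows that the parity of monopoles inside $A$ equals the symplectic overlap of $P$ with the membrane operator $T_A=\prod_{u\in A}S_u$, whose support is the boundary $\partial A$. Taking $A$ a box of radius $\sim R$ around $M$ already shows $P$ must be nontrivial somewhere on a surface of area $\Omega(R^{2})$, but anticommutation alone guarantees only one such site, so the core of the proof is to upgrade this to $\Omega(R^{2})$ occupied sites.

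The engine for the area bound is the correspondence with the 2D toric code recorded in Fig.~\ref{fig:toric}. I would slice the region separating $M$ from the distant excitations by $\Omega(R)$ parallel lattice planes and argue that, in each plane, the in-plane restriction of the generators $S_u$ realizes a 2D toric code in which $P$ must transport the odd monopole charge a distance $\Omega(R)$; the standard toric-code distance bound then forces $\Omega(R)$ occupied qubits within that plane. The essential point is string rigidity: as Fig.~\ref{fig:toric} shows, any $X$- or $Y$-type segment that tries to leave a plane violates a generator in the adjacent plane, so the per-plane contributions are forced onto disjoint qubit sets. Summing over the $\Omega(R)$ planes then yields $\Omega(R^{2})$.

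The necessity of the hypothesis that $|M|$ is \emph{odd} enters here as a sanity check: if $M$ had even cardinality its excitations could be produced in dipole pairs by a single rigid string of weight $O(R)$, so oddness is exactly what forbids a string solution and forces the membrane. Concretely, oddness means the monopole parity detected by a thin slab enclosing $s$ cannot be cancelled within distance $R$ and must be transmitted through every intermediate slice.

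The main obstacle is turning the heuristic ``a monopole is the corner of a rectangular membrane'' into a rigorous plane-by-plane bound, and two issues require care. First, the generators couple neighboring planes through their $Z$-legs, so the slices are not independent 2D codes; one must track how the $X/Y$-content of $P$ in one plane feeds the syndrome of the next and show that the odd parity propagates across all $\Omega(R)$ slices rather than being absorbed. Second, one must rule out that the adversarially placed far excitations at distance $R$ supply a cheaper return path for the charge. I expect both to be controlled by the conservation laws of the previous step applied to thin slabs, which localize the monopole parity to the planes adjacent to $s$ and thereby force each intermediate plane to carry its own length-$\Omega(R)$ string.
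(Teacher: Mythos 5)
Your reduction to the Pauli case is sound and is essentially the paper's own first step, and your remark about why oddness of $|M|$ matters is correct. The gap is in the engine of the area bound. Your plan — slice the separating region into $\Omega(R)$ parallel lattice planes and extract an $\Omega(R)$ toric-code obstruction from each slice — cannot work in the form stated, and the paper's own tightness example shows why. The membrane operator of Eqs.~(\ref{mem1})--(\ref{mem2}) (Fig.~\ref{fig:membrane}) consists of Pauli $Z$'s only and is supported \emph{entirely inside a single $[001]$-plane}, yet it creates four isolated monopoles pairwise separated by $\Theta(R)$; taking $M$ to be one of them, this is a legitimate $P_R$ whose support meets exactly one plane of a $[001]$-slicing. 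Since the adversary can equally produce $X$- or $Y$-type membranes in $[100]$- or $[010]$-planes, no fixed coordinate slicing direction forces per-plane contributions. Worse, the quantity you want to ``transport'' plane by plane is not conserved: a single $X_v$ error changes the excitation parity of the two $[001]$-planes adjacent to $v$ (its $z$-legs move charge between planes), which is precisely why the genuine conserved charges of this model, Eq.~(\ref{theta}), are attached to planes orthogonal to \emph{body-diagonals}, not to coordinate planes. If you retreat to those genuine charges, only the $O(1)$ bilayers passing through $M$ are charged; intermediate planes carry zero syndrome, so there is no well-defined charge ``passing through every slice'', and concentric flexible loops inside the one charged bilayer give only $O(R)$ disjoint detectors — the linear bound. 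The paper states this obstruction explicitly in Section~\ref{subs:mem}: both membranes and bilayer loops can be packed only $O(R)$ times around the monopole.

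What is missing is the paper's key idea (Lemma~\ref{lemma:monopole}): use \emph{one-dimensional} detectors that are products of stabilizers enclosing $M$, namely the tetrahedron (closed rigid string-net) operators $W(T)$ of Section~\ref{subs:Hformal}, which satisfy $W(T)\sim \prod_{v\in T\cap \Lambda_{odd}} S_v$, Eq.~(\ref{O(T)}). Any tetrahedron of size $\sim R$ whose interior contains $M$ but none of the far excitations encloses odd charge, hence anticommutes with $P_R$ and must share a qubit with its support. Because these detectors have string-like, weight-$O(R)$ support, one can select $\Omega(R^2)$ of them inside the ball of radius $R$, all containing $M$, with each qubit touched by only $O(1)$ of them; counting shared qubits then gives $|P_R|=\Omega(R^2)$. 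This is the step your plane-slicing cannot reproduce: any family of two-dimensional detectors (membranes, or loops confined to a single bilayer) admits only $O(R)$ essentially disjoint members around $M$, so the quadratic count requires detectors that are simultaneously stabilizer products enclosing $M$ and one-dimensional — in this model, exactly the rigid string-nets.
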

The theorem immediately implies that the parity of the number of monopoles defines a $\ZZ_2$ topological charge,
that is, this quantity cannot be changed by any local operator.  In this respect monopoles are similar to the electric and magnetic excitations in the 2D toric code.
What is more surprising, Theorem~\ref{thm:bound} implies that no string-like operator can create an isolated monopole at its end-point. Indeed, otherwise one would be able
to construct a sequence of string-like operators $\{P_R\}$ as above whose weight grows only linearly with $R$. The absence of string operators associated with monopoles also implies that no local operator can create an isolated  {\em pair} of monopoles from the ground state even if the two monopoles
are separated by a constant distance\footnote{Note however, that the number
of monopoles modulo four does not define a topological charge; for example, a single-qubit
Pauli operator can transform $1$ monopole into $3$ monopoles~\cite{Chamon05}.}.
Indeed, otherwise one would be able to create a  long-range pair of monopoles separated by distance $R$
by combining $O(R)$ short-range pairs, which would require an operator acting only on $O(R)$ qubits.
It also shows that no local operator can induce hopping of monopoles between adjacent sites
unless there are other monopoles nearby.
In Section~\ref{subs:mem} we show that isolated monopoles can be created on the corners of rectangular-shaped membranes, see Fig.~\ref{fig:membrane}. It demonstrates that the lower bound in Theorem~\ref{thm:bound} is tight.
The presence of isolated monopoles which can only created by membrane-like operators is an essential and new feature of the model. As was discussed in~\cite{Chamon05}, it might have interesting physical implications, for example,
the diffusive dynamics of the dilute gas of monopoles might be characterized by an
exponentially small (as a function of the inverse temperature) diffusion constant.

Let us now consider an excited spot $M\subseteq \Lambda_{odd}$ that contains an even number of monopoles. In Section~\ref{subs:SSS} we show how to decompose $M$ into a combination of composite excitations
that we call {\em dipoles} and {\em quadrupoles}.
These excitations are composed of two and four monopoles respectively.
We show that both dipoles and quadrupoles can be created by string-like operators;
in this respect they are similar to electric charges in the 3D surface code~\cite{HZW:3dmodels}.
The unusual feature of the model is that dipoles
can only be created at end-points of {\em rigid strings}, that is, these strings
must be aligned with face-diagonals of the lattice. An example of a rigid string
lying in a horizontal plane and aligned with the face-diagonal $\hat{x}+\hat{y}$
is shown in Fig.~\ref{fig:toric}(c), see also Fig.~\ref{fig:Hstring}. A rigid string creates a dipole near each of its end-points.  The quadrupole-type strings must lie in a pair of adjacent planes orthogonal
to some body-diagonal of the lattice. We shall refer to such pair of planes as a {\em bilayer}.
As we explain in Section~\ref{subs:Texample}, one can identify qubits of any bilayer with a 2D hexagonal
lattice while the generators  $S_u$ centered inside the  bilayer
can be identified with the hexagonal plaquette operators of Kitaev's honeycomb lattice model~\cite{kitaev:anyon_pert}.
Taking a product of elementary link operators  of Kitaev's model
over an arbitrary path on the hexagonal lattice one obtains a string operator creating a
quadrupole near each end-point of the chosen path. We call such strings
{\em flexible bilayer strings}, or simply flexible strings,
since they  can be arbitrarily deformed as long as a string does not leave the bilayer it belongs to.
In Section \ref{subs:SSS} we introduce a complete set of topological charges which we use to
show that  monopoles, dipoles, and quadrupoles are topologically distinct from each other.

Properties of the quantum error correcting code corresponding to the model Eq.~(\ref{H})
are described in Section~\ref{sec:encoding}.
Recall that the main parameters of a quantum code
are the number of physical qubits $n$, the number of logical qubits $k$, and the code distance $d$
which is the number of single-qubit errors needed to destroy the encoded information,
see Section~\ref{sec:encoding} for formal definitions.
Theorem~\ref{thm:1} implies that
the degenerate ground space $\calL$ can be used to encode
$k=4g$ logical qubits into $n=4p_xp_yp_z$ physical qubits.
The best lower bound on the distance $d$ that we can rigorously prove
is $d=\Omega(L)$, where $L$ is the smallest of the lattice dimensions, see
Lemma~\ref{lemma:stability} in Appendix~\ref{sec:stability}.
This lower bound however completely ignores some subtle features of the model
such as the relationships between prime factors of the lattice
dimensions. These features affect the ground state properties in a dramatic way
as can be seen from Theorem~\ref{thm:1}.
Hence we believe that a more favorable scaling of the distance can be achieved
by fine-tuning of the parameters $p_x,p_y,p_z$, see the discussion in Section~\ref{sec:encoding}.

We conclude by discussing some open problems in Section~\ref{sec:open}.

\section{Rigid and flexible strings}
\label{sec:strings}

In this section we describe two classes of string-like operators capable of creating remote clusters of excitations located near the two end-points of a string. Since classification of string operators is a bulk property we shall only consider an infinite lattice.

The strings from the first class which we call rigid strings are straight lines aligned with one of the six face-diagonals of the lattice. The corresponding string operator creates a pair of excitations (a dipole) located near each end-point of the string. The strings from the second class which we call flexible strings lie in a pair  of two adjacent planes orthogonal to one of the four
body-diagonals of the lattice. We call such a pair of planes  a bilayer.  Flexible strings can follow an arbitrary trajectory within the chosen bilayer but they cannot leave it. The corresponding string operator creates four excitations (a quadrupole) at each end-point of the string.

We shall see that string operators associated with closed flexible loops can be expressed in terms of the generators $S_u$. Although a single rigid string cannot be closed into a loop, we shall see that one can combine multiple rigid strings into a closed $3$-valent graph (a string-net) such as a tetrahedron. The corresponding string-net operator can also be expressed in terms of the generators $S_u$.

\subsection{Notations}
Let us remind the reader of some standard notation pertaining to the FCC lattice.
The basis vectors of the 3D simple cubic  lattice  $\Lambda=\ZZ\times \ZZ\times \ZZ$ will be denoted as
\[
\hat{x}=\left[ \ba{c} 1 \\  0 \\   0 \\ \ea \right], \quad
 \hat{y}=\left[ \ba{c} 0 \\  1 \\   0 \\ \ea \right], \quad
 \hat{z}=\left[ \ba{c} 0 \\  0 \\   1 \\ \ea \right].
 \]
Let $a,b,c\in \{1,0,-1\equiv \bar{1}\}$ be a triple of integers.
We shall use the following abbreviations.
\begin{center}
\begin{tabular}{|r|l|}
\hline
& \\
$[abc]$  & a vector $a\hat{x} + b\hat{y} + c\hat{z}$ \\
$[abc]$-plane & a plane orthogonal to the vector $[abc]$ \\
& \\
\hline
\end{tabular}
\end{center}
In this notation the six face-diagonals of the lattice, see Fig.~\ref{fig:FCC}, are
\be
\label{face-diag}
[110], \quad [1\bar{1}0], \quad [101], \quad [10\bar{1}], \quad [011], \quad [01\bar{1}],
\ee
and the four body-diagonals of the lattice are
\be
\label{body-diag}
[111], \quad [1\bar{1}\bar{1}], \quad [\bar{1}1\bar{1}], \quad [\bar{1}\bar{1}1].
\ee

\subsection{Rigid strings and dipoles}
\label{subs:Hformal}

Let $m>0$ be any integer. Consider a set of $m+1$ sites
\[
\gamma=\{ (0,0,0),(1,1,0),\ldots, (m,m,0)\}\subset \Lambda_{even}
\]
and an operator
\[
W(\gamma)=\prod_{u\in \gamma} \, Z_u.
\]
One can easily check that $W(\gamma)$ anti-commutes with exactly four generators $S_u$ for which
$u$ lies in the $[001]$-plane with the $z$-coordinate $k=0$ and has exactly one neighbor in $\gamma$, see Fig.~\ref{fig:Hstring}, namely,
$u=(-1,0,0)$, $u=(0,-1,0)$, $u=(m+1,m,0)$, and $u=(m,m+1,0)$.
The set of sites $\gamma$ is an example of a rigid string.

The other rigid strings associated with the $[010]$-planes and $[100]$-planes can be obtained from this example by applying the lattice symmetries and replacing $Z_u$ by $Y_u$ and $X_u$ respectively.
Let $h$ be one of the six face-diagonals, see  Eq.~(\ref{face-diag}) or
Fig.~\ref{fig:FCC}. We can formally define the rigid strings of type $h$ as follows.
\begin{dfn}
A sequence of sites $\gamma=(u_0,u_1,\ldots,u_m)$ with $u_i \in \Lambda_{even}$ is called
an rigid string of type $h$ iff
$u_{i+1}-u_i=h$ for all $i=0,\ldots,m-1$.
The sites $u_0$ and $u_m$ are called the end-points of the string.
\label{dfn:Hstring}
\end{dfn}
We shall refer to a pair of excitations associated with an end-point of an rigid string of type $h$
 as a {\em dipole} of type $h$.


\begin{figure}
\centerline{
\includegraphics[height=5cm]{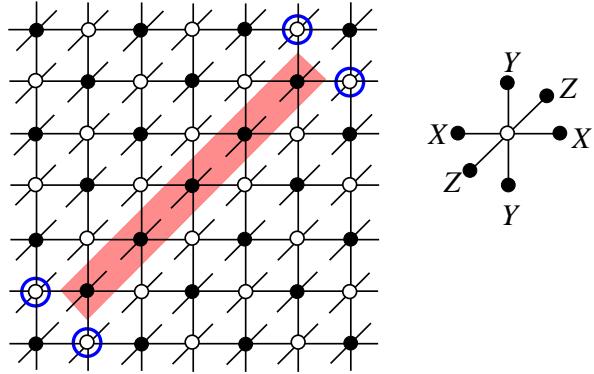}}
\caption{Example of a rigid string $\gamma$ lying in the $[001]$-plane.  The string operator $W(\gamma)$ acts by $Z$ on the qubits in the shaded region. Double circles near the end-points of the string indicate locations of excitations created by the string operator $W(\gamma)$.
A pair of excitations located near each end-point of the string is called a dipole.}
\label{fig:Hstring}
\end{figure}

We may ask whether it is possible to define a closed rigid string  $\gamma$
such that the corresponding string operator $W(\gamma)$
is a  product of the  generators $S_u$ (similar to  closed loops in the 2D toric code).
Due to the rigidity of the strings one cannot close a string in a single plane. However one can form a three-dimensional object, a tetrahedron, by using all six types of rigid strings interconnected with each other
to form a closed string-net, see Fig.~\ref{fig:tetra}. Note that the
action of the three rigid strings at the qubits located at vertices of
the tetrahedron cancels since $XYZ=iI$. Thus the smallest tetrahedron operator in a cube of dimensions $3\times 3 \times 3$ equals the single generator $S_v$ at the center of the cube, depicted in Fig.~\ref{fig:FCC}.

\begin{figure}
\centerline{
\includegraphics[height=4cm]{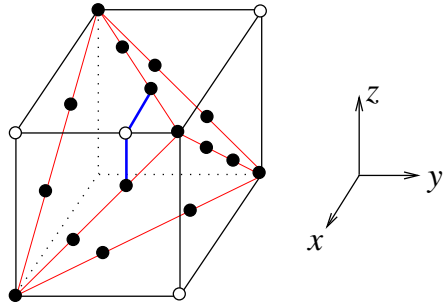}}
\caption{(Color Online) Construction of a closed rigid string-net or tetrahedron operator. The six edges of the tetrahedron $T$ are rigid strings of the six possible types lying on the faces of the cube. The string-net operator $W(T)$ is the product of the six corresponding string operators.
Note that $W(T)$ acts trivially at the qubits located at vertices of $T$ since the triple
of strings incident to any vertex of $T$ cancel each other through the identity $XYZ\sim I$.
The open dot with two incident blue lines indicates the location of excitations created by a pair of string operators.
The two excitations cancel each other.}
\label{fig:tetra}
\end{figure}

Here is how we define a general tetrahedron operator. Let $C\subseteq \Lambda$ be a some cube of the lattice such that all eight vertices of $C$ belong to $\Lambda_{even}$.  Let $T$ be a tetrahedron formed by four vertices of $C$ such that edges of $T$ are diagonals of faces of $C$, see Fig.~\ref{fig:tetra}. By construction, all sites lying on the six
edges of $T$ are even. We can denote the six edges of $T$ as $\gamma^x_1,\gamma^x_2,\gamma^y_1,\gamma^y_2,\gamma^z_1,\gamma^z_2$,
where  $\gamma^\alpha_1$ and $\gamma^\alpha_2$ is the pair
of edges orthogonal to the axis $\alpha$. We define the tetrahedron operator $W(T)$ as
\[
W(T)=\prod_{\alpha =x,y,z} \, \prod_{j=1,2} \, \prod_{u\in \gamma^\alpha_j} \, \sigma^\alpha_u.
\]
Each rigid string operator involved in $W(T)$ creates a pair of excitations near the end-point of the string, i.e., near some vertex of the tetrahedron $T$. However, as one can easily check, the excitations created by a triple of strings incident to any vertex of $T$ cancel each other and thus the tetrahedron operator $W(T)$ commutes with all generators $S_v$.
A simple inspection shows that $W(T)$ can be represented as a product of {\em all} generators $S_v$
in the interior of $T$ (up to a phase factor):
\be
\label{O(T)}
W(T)\sim \prod_{v\in T\cap \Lambda_{odd}} \, S_v.
\ee
The tetrahedron operators will be important in Section \ref{subs:mem}, Lemma \ref{lemma:monopole}, where we use them to detect an isolated monopole.


\subsection{Flexible strings and quadrupoles}
\label{subs:Texample}

\begin{figure}
\centerline{
\includegraphics[height=5cm]{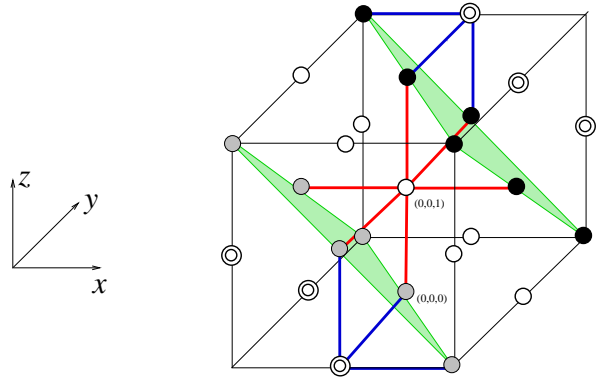}}
\caption{(Color Online) Open dots indicate centers of generators $S_u$ located in the plane $\Sigma_1$.
Gray and black dots indicate qubits located in the planes $\Sigma_0$ and $\Sigma_2$ respectively.
The union of the planes $\Sigma_0\cup \Sigma_2$ is called a bilayer (shaded green area).
A generator  (red lines) centered in the plane $\Sigma_1$ can be viewed as six-body hexagonal plaquette
 operators acting on this bilayer. The double circles indicate centers of generators $S_u$
 located in the planes  $\Sigma_{-1}$ and $\Sigma_3$. These generators
  touch three qubits in $\Sigma_0$ or three qubits in  $\Sigma_2$ and can be represented as star operators, see Figs.~\ref{fig:hexagons} and \ref{fig:stars}.}
\label{fig:hexplane}
\end{figure}

The construction of flexible strings is more involved. Let us start by constructing such a string lying in a pair of $[111]$-planes $\Sigma_0$, $\Sigma_2$,
where
\be
\Sigma_\alpha=\{ (i,j,k)\in \Lambda \, : \, i+j+k=\alpha\}.
\label{eq:siga}
\ee
We identify the union $\Sigma_0\cup \Sigma_2$ with the 2D hexagonal lattice
$\hexlat$ by projecting the sublattices $\Sigma_0$, $\Sigma_2$ onto some fixed $[111]$-plane,
see e.~g.~Fig.~\ref{fig:hexplane}.
We shall refer to the pair of planes $\Sigma_0\cup \Sigma_2$ as a {\em $[111]$-bilayer}, or simply
a bilayer.

It will be convenient to color the sites of $\hexlat$ in black and gray using the
checkerboard coloring as shown in Fig.~\ref{fig:hexagons}. Then we have the following correspondence between the union of the sublattices $\Sigma_0\cup \Sigma_1\cup \Sigma_2$ and the 2D lattice $\hexlat$:
\begin{center}
\begin{tabular}{c|c}
3D  & 2D  \\
\hline
$\Sigma_0$ & gray sites \\
\hline
$\Sigma_1$ & centers of hexagons (white sites)\\
\hline
$\Sigma_2$ & black sites \\
\hline
\end{tabular}
\end{center}
\begin{figure}
\centerline{
\includegraphics[height=4cm]{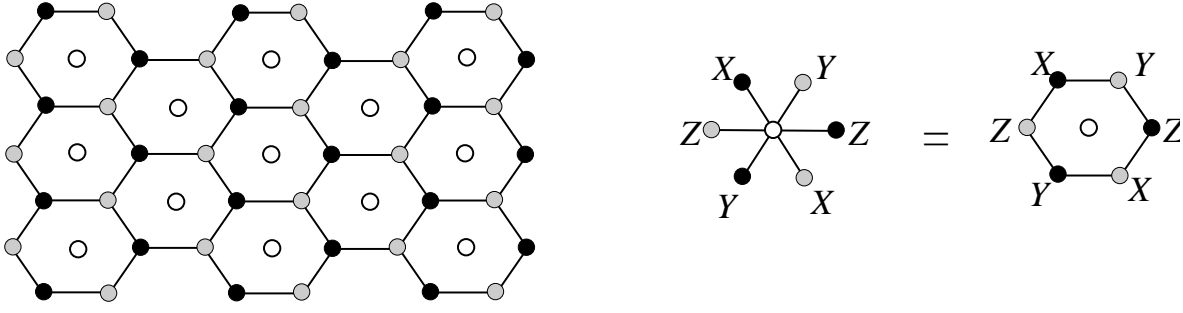}}
\caption{The correspondence between the union of the sublattices
$\Sigma_0\cup \Sigma_1 \cup \Sigma_2$ and the 2D hexagonal lattice.
The sites  indicated by open dots (centers of hexagons) represent $\Sigma_1$.
The  sites indicated by gray and black dots represent $\Sigma_0$ and $\Sigma_2$ respectively.
The generators $S_u$ located at $\Sigma_1$ can be viewed as plaquette
operators associated with the hexagons as shown on the right. }
\label{fig:hexagons}
\end{figure}

Define a triple of vectors
\be
\label{lambdas}
\lambda^x=[011], \quad \lambda^y=[101],
\quad \lambda^z=[110].
\ee
Consider any pair of sites $u\in \Sigma_0$ and $v\in \Sigma_2$. By construction, $(u,v)$ is a link
of the hexagonal lattice iff $v=u+\lambda^\alpha$ for some $\alpha\in \{x,y,z\}$.
We shall say in this case that $(u,v)$ is an $\alpha$-link.

Define the {\em link operator} (as in Kitaev's honeycomb model \cite{kitaev:anyon_pert})
\[
K_{(u,v)}=\sigma^\alpha_u \sigma^\alpha_v \quad \mbox{if $(u,v)$ is an $\alpha$-link}.
\]
Given any pair of links $e\ne e'$ the operators $K_e$ and $K_{e'}$
commute iff $e,e'$ do not overlap and anti-commute
iff $e$ and $e'$ share exactly one vertex.
For any plaquette (hexagon) $p$ and and any site $w$ of $\hexlat$ define plaquette and star operators
\be
\label{BAoperators}
B_p=\prod_{e\in \partial p}\; K_e, \quad A_w=i\prod_{e\in {\mathrm{star}}(w)}\; K_e,
\ee
see Fig.~\ref{fig:stars}.
\begin{figure}
\centerline{
\includegraphics[height=3cm]{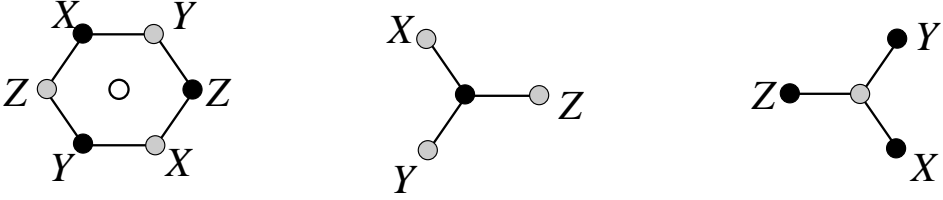}}
\caption{Plaquette and star operators.}
\label{fig:stars}
\end{figure}
Here $\partial p$ is the set of six links forming the boundary of $p$, while ${\mathrm{star}}(w)$ is
the set of three links incident to $w$. Note that the action of $A_w$ onto the site $w$ is proportional to $iX_w Y_w Z_w \sim I$, that is, $A_w$ acts only on the nearest neighbors of $w$.
In particular, $A_w$ acts only on gray (black) sites iff $w$ is black (gray),
see Fig.~\ref{fig:stars}.
The only generators that can act non-trivially on qubits of $\hexlat=\Sigma_0\cup \Sigma_2$
are those located in the planes
$\Sigma_{-1}$, $\Sigma_1$, and $\Sigma_3$.  One can easily check that
the action of these generators on the qubits of the hexagonal lattice can be described as follows:
\begin{center}
\begin{tabular}{c|c}
3D  & 2D  \\
\hline
$\Sigma_{-1}$ & star operators centered at black sites  \\
\hline
$\Sigma_1$ & plaquette operators \\
\hline
$\Sigma_3$ & star operators centered at gray sites \\
\hline
\end{tabular}
\end{center}
We have the commutation rules
\be
\label{link-plaq}
[K_e,B_p]=0 \quad \mbox{for all links $e$, for all plaquettes $p$}.
\ee
Furthermore,
\be
[K_e,A_w]=0 \quad \mbox{iff $e$ is incident to $w$ or $e$ does not overlap with ${\mathrm{star}}(w)$}.
\ee
Alternatively,
\be
\label{link-star}
K_eA_w=-A_w K_e \quad \parbox[t]{10cm}{iff $e$ is incident to a neighbor of $w$ but not incident to $w$}.
\ee
Consider any pair of sites $u,v\in \hexlat$ (black or gray) and let $\gamma$ be any path on $\hexlat$
connecting $u$ and $v$. Define a string operator
\[
W(\gamma)=\prod_{e\in \gamma}\, K_e.
\]
\begin{figure}
\centerline{
\includegraphics[height=4cm]{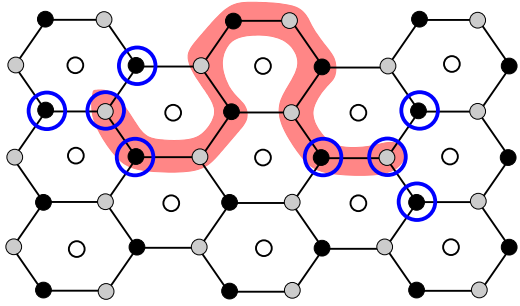}}
\caption{Example of a flexible bilayer string. Double circles near each end-point of the string indicate the locations of excitations created by the string operator $W(\gamma)$. Recall that excitations located at gray and black sites correspond to generators $S_u$ with $u\in \Sigma_3$ and $u\in \Sigma_{-1}$. The group of four excitations
located near each end-point of the string is called a quadrupole. }
\label{fig:Tstring}
\end{figure}
It follows from Eq.~(\ref{link-plaq}) that $W(\gamma)$ commutes with all plaquette operators $B_p$.
In addition, since $\gamma$ has even number of links incident to any site $w\neq u,v$,
we infer from Eq.~(\ref{link-star}) that $W(\gamma)$ commutes with star operators
$A_w$ unless $w$ is located near the end-points $u$ or $v$. We shall be mostly interested in the
case when the end-points $u,v$ are sufficiently far apart such that the commutation rules between
$W(\gamma)$ and star operators can be analyzed independently near each end-point of $\gamma$.
Then one can easily check that $W(\gamma) A_w=-A_w W(\gamma)$ iff
$w$ is the end-point of $\gamma$, or $w$ is a neighbor of the end-point of $\gamma$.

It is clear that similar strings can be constructed in the bilayers orthogonal to any body-diagonal $t$,
see Eq.~(\ref{body-diag}).
Such a flexible bilayer string lies in a pair of even planes orthogonal to some body-diagonal $t$ separated by a single odd plane as Fig.~\ref{fig:hexplane} demonstrates. The group of four excitations located near each end-point of the string will be referred to as a {\em quadrupole of type $t$}. More formally, we define these string operators as


\begin{dfn}[\bf Flexible bilayer strings]
Let $t=(t_x,t_y,t_z)$ be any body-diagonal.
A sequence of sites $\gamma=(u_0,u_1,\ldots,u_m)$ with $u_i \in \Lambda_{even}$ is called
a flexible $[t]$-bilayer string iff there exists a sequence $\alpha_0,\ldots,\alpha_{m-1}\in \{x,y,z\}$ and $\epsilon=\pm 1$ such that
$u_{j+1}=u_j + (-1)^j \epsilon  \, \lambda(\alpha_j)$ for all $j=0,\ldots,m-1$, where
the vectors $\lambda(\alpha)$ are defined as
\[
\lambda(x)=\left[ \ba{c} 0 \\ t_y \\ t_z \\ \ea \right],
\quad
\lambda(y)=\left[ \ba{c} t_x \\ 0 \\ t_z \\ \ea \right],
\quad
\lambda(z)=\left[ \ba{c} t_x \\ t_y \\ 0 \\ \ea \right].
\]
The sites $u_0$ and $u_m$ are called the end-points of the string. We can define a string operator $W(\gamma)$ associated with $\gamma$ as
\be
W(\gamma)=\prod_{j=0}^{m-1} \sigma_{u_j}^{\alpha_j} \sigma_{u_{j+1}}^{\alpha_j}.
\ee
\end{dfn}

We assume that the separation between the end-points of $\gamma$ is sufficiently large so one can derive the commutation rules near each end-point independently. The commutation rules between $W(\gamma)$ and the generators $S_v$ are then summarized by the following Lemma~\ref{lemma:Tstring}.

\begin{lemma}
\label{lemma:Tstring}
Let $\gamma=(u_0,u_1,\ldots,u_m)$ be a flexible $[t]$-bilayer string.
Then $W(\gamma) S_v = -S_v W(\gamma)$ iff $v$ is one of the eight sites listed below:
\[
u_0-\epsilon\left[ \ba{c} t_x \\ 0 \\  0 \\ \ea \right],
\quad
u_0-\epsilon\left[ \ba{c} 0 \\ t_y \\  0 \\ \ea \right],
\quad
u_0-\epsilon\left[ \ba{c} 0 \\ 0 \\  t_z \\ \ea \right],
\quad
u_0+\epsilon\left[ \ba{c} t_x \\ t_y \\  t_z \\ \ea \right],
\]
and
\[
u_m+\epsilon (-1)^m \left[ \ba{c} t_x \\ 0 \\  0 \\ \ea \right],
\quad
u_m + \epsilon (-1)^m \left[ \ba{c} 0 \\ t_y \\  0 \\ \ea \right],
\quad
u_m + \epsilon (-1)^m \left[ \ba{c} 0 \\ 0 \\  t_z \\ \ea \right],
\quad
u_m - \epsilon (-1)^m \left[ \ba{c} t_x \\ t_y \\  t_z \\ \ea \right].
\]
\end{lemma}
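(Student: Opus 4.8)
The plan is to reduce the whole statement to the honeycomb picture of Section~\ref{subs:Texample} and then to translate the commutation behavior of the \emph{star} operators $A_w$ into the commutation behavior of the generators $S_v$. First I would use the cubic point-group symmetries to reduce to $t=[111]$, so that $\gamma$ lies in the bilayer $\Sigma_0\cup\Sigma_2$ and $\lambda(\alpha)=\lambda^\alpha$ as in Eq.~(\ref{lambdas}); the other three body-diagonals of Eq.~(\ref{body-diag}) are handled by the symmetries permuting and reflecting the axes, which map generators to generators and the claimed eight sites to their images. With this reduction the rule $u_{j+1}=u_j+(-1)^j\epsilon\lambda(\alpha_j)$ says precisely that $\gamma=(u_0,\dots,u_m)$ is a path on $\hexlat$ whose links alternate between $\Sigma_0\to\Sigma_2$ and $\Sigma_2\to\Sigma_0$, and each factor $\sigma^{\alpha_j}_{u_j}\sigma^{\alpha_j}_{u_{j+1}}$ is exactly the link operator $K_{(u_j,u_{j+1})}$ of Eq.~(\ref{BAoperators}). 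Hence $W(\gamma)=\prod_{e\in\gamma}K_e$ is the honeycomb string operator, and the machinery of Section~\ref{subs:Texample} applies verbatim.

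Next I would invoke the commutation rules already in hand. Since $W(\gamma)$ acts only inside the bilayer, it commutes automatically with every $S_v$ whose support misses $\Sigma_0\cup\Sigma_2$, so only $v\in\Sigma_{-1}\cup\Sigma_1\cup\Sigma_3$ can contribute; by the dictionary these are the plaquettes $B_p$ ($v\in\Sigma_1$) and the stars $A_w$ ($w$ black for $v\in\Sigma_{-1}$, $w$ gray for $v\in\Sigma_3$). Equation~(\ref{link-plaq}) disposes of all plaquettes. For the stars I would, for each site $w$, count modulo two the links of $\gamma$ that are incident to a \emph{neighbor} of $w$ but not to $w$ itself, which is the anticommutation condition of Eq.~(\ref{link-star}). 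An interior vertex contributes two such links (the two links one step past $w$ on either side), so its star commutes; for the endpoint $w=u_0$ only $e_1$ survives, for the on-path neighbor $u_1$ only $e_2$ survives, and for each of the two off-path neighbors of $u_0$ only $e_0$ survives. Thus, once the endpoints are far enough apart that their two neighborhoods do not interact, the anticommuting stars are exactly $A_{u_0}$ together with the three stars at the honeycomb-neighbors of $u_0$, and symmetrically at $u_m$: four stars per endpoint, eight in all.

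Finally I would convert each star center $w$ into the location $v$ of its generator. A gray center $w\in\Sigma_0$ corresponds to $v=w+[111]$ (the generator in $\Sigma_3$) and a black center $w\in\Sigma_2$ to $v=w-[111]$; for a general body-diagonal these read $v=w+t$ and $v=w-t$. Substituting $w=u_0$ and the three neighbors $w=u_0+\lambda^\alpha$ ($\alpha=x,y,z$), and using $\lambda^\alpha-[111]=-\hat\alpha$, collapses the four sites near $u_0$ to $u_0-\epsilon[t_x,0,0]$, $u_0-\epsilon[0,t_y,0]$, $u_0-\epsilon[0,0,t_z]$ and $u_0+\epsilon[t_x,t_y,t_z]$, which is the first group. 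The same computation at $u_m$ gives the second group; the factor $(-1)^m$ appears because the parity of $m$ decides whether $u_m$ lies in the same even plane as $u_0$ (hence has the same gray/black type) or the opposite one.

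The delicate part is exactly this sign bookkeeping at $u_m$: the relative sign between the three axis-displacements and the body-diagonal displacement, and its dependence on $\epsilon(-1)^m$, has to be fixed carefully rather than guessed. The cleanest way to pin it down is reversal invariance: relabeling $\tilde u_i=u_{m-i}$ leaves $W(\gamma)$ unchanged while sending $\epsilon\mapsto(-1)^m\epsilon$ and $\alpha_j\mapsto\alpha_{m-1-j}$, so the $u_m$-group must be the image of the $u_0$-group under $u_0\mapsto u_m$, $\epsilon\mapsto(-1)^m\epsilon$. Combining this symmetry with one explicit local calculation at a single endpoint fixes all eight signs unambiguously. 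The only remaining point needing care is making the ``endpoints sufficiently separated'' hypothesis quantitative, i.e.\ verifying that a fixed constant separation already guarantees the two four-star clusters touch disjoint sets of generators, so that the two endpoint analyses are genuinely independent.
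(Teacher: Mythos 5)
Your route is the same one the paper intends: Lemma~\ref{lemma:Tstring} is stated in Section~\ref{subs:Texample} with no separate proof precisely because it is meant to follow from the honeycomb machinery developed just before it, and your reduction to $t=[111]$, your identification of $W(\gamma)$ with $\prod_{e\in\gamma}K_e$, your use of Eqs.~(\ref{link-plaq}) and (\ref{link-star}) to isolate the four anticommuting stars per end-point (the end-point plus its three honeycomb neighbors), and your dictionary (gray center $w\in\Sigma_0\mapsto v=w+t$, black center $w\in\Sigma_2\mapsto v=w-t$) are all correct. In particular your derivation of the $u_0$ group is sound: for $t=[111]$, $\epsilon=+1$, $u_0\in\Sigma_0$, the anticommuting generators are exactly $u_0-\hat{x}$, $u_0-\hat{y}$, $u_0-\hat{z}$, $u_0+[111]$, the first list of the lemma.

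The gap is at the far end --- exactly the sign bookkeeping you flag as ``the delicate part'' and then do not actually carry out. Your reversal argument is right as far as it goes: $\tilde{u}_i=u_{m-i}$ is a flexible string with $\tilde{\epsilon}=(-1)^m\epsilon$, so the $u_m$ group must be the image of the $u_0$ group under $u_0\mapsto u_m$, $\epsilon\mapsto(-1)^m\epsilon$. But that image is $u_m-\epsilon(-1)^m t_x\hat{x}$, $u_m-\epsilon(-1)^m t_y\hat{y}$, $u_m-\epsilon(-1)^m t_z\hat{z}$, $u_m+\epsilon(-1)^m(t_x\hat{x}+t_y\hat{y}+t_z\hat{z})$, which is the \emph{sign-flip} of the second list in the statement, not that list itself; so your claim that the same computation ``gives the second group'' contradicts your own symmetry argument. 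A direct check shows the symmetry argument wins. Take $t=[111]$, $\epsilon=+1$, $u_0=(0,0,0)$, steps $\alpha_j=x,y,x,y,x$, so that $u_5=(-2,3,1)$ and $W(\gamma)\propto X_{u_0}Z_{u_1}Z_{u_2}Z_{u_3}Z_{u_4}X_{u_5}$: then $S_v$ anticommutes with $W(\gamma)$ precisely for $v=u_5+\hat{x}$, $u_5+\hat{y}$, $u_5+\hat{z}$, $u_5-[111]$, while the four printed sites $u_5-\hat{x}$, $u_5-\hat{y}$, $u_5-\hat{z}$, $u_5+[111]$ all commute with $W(\gamma)$. One can also see this without any computation: with the printed second list, the two end-point clusters of a $[111]$-string would carry nonvanishing topological charges $\theta_{t,\pm1}$, $\theta_{t,\pm3}$ in total, contradicting Lemma~\ref{lemma:complete} since $W(\gamma)$ has finite support, whereas the sign-flipped list makes the two ends' charges cancel. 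In short, the second list of the lemma as printed has all four signs reversed (a typo in the paper, harmless downstream since only the $u_0$-type list is ever used, e.g.\ in Eq.~(\ref{singleQ})); your method, executed carefully, proves the corrected statement, but your write-up asserts agreement with the uncorrected one --- which is precisely the verification you postponed.
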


Again, we can ask how one can construct closed flexible strings. It is not hard to see that such closed strings which lie, say, in a $[111]$-bilayer, can be constructed from the hexagonal plaquette operators $B_p$. Specifically, let $p$ be any plaquette of $\Omega_{hex}$ and $v\in \Sigma_1$ be the center of $p$.
By definition of the plaquette operators we have $B_p=S_v$ and $B_p$ can be regarded as a string operator, $B_p=W(\gamma)$ where $\gamma=\partial p$ is the closed flexible string
that consists of the six links lying on the boundary of $p$. Similarly, if $\gamma$ is a closed flexible string without self-intersections, and $\mathrm{Int}(\gamma)$ is the set of plaquettes encircled by $\gamma$, we get the identity
\be
\label{closedTstring}
W(\gamma) \sim \prod_{p\in \mathrm{Int}(\gamma)}\, B_p.
\ee
We will use these closed flexible  strings in Section~\ref{subs:SSS} to characterize the topological charges.




\section{Classification of excitations}
\label{sec:SSS}

The purpose of this section is to classify the excitations of our 3D model. We will first consider how single isolated excitations which we call monopoles can emerge. We say that an excited state $|\psi\ra$ has an isolated  monopole at a site $u\in \Lambda_{odd}$
iff $S_u\, |\psi\ra =-|\psi\ra$ and $S_v\, |\psi\ra =|\psi\ra$ for all sites $v\ne u$ in a sufficiently large neighborhood of $u$.
Note that the string operators described in Section~\ref{sec:strings} can only create an {\em even} number of excitations
(a dipole or a quadrupole) near each end-point of a string.
Therefore a natural question is whether some more complicated  string-like operators  could create a single monopole.
In Section~\ref{subs:mem} we answer this question in the negative: we prove in Lemma~\ref{lemma:monopole} that the minimum weight of a Pauli operator creating an isolated monopole grows as $\Omega(R^2)$, where $R$ is the separation between the monopole and the nearest excitation. We then use the lemma to prove Theorem~\ref{thm:bound}.

The goal of Section \ref{subs:SSS} is to identify all possible excitations of the model that are
topologically non-trivial, that is, excitations that cannot be created locally from the ground state.
We describe a complete set of $\ZZ_2$ topological charges such that an excitation is
topologically non-trivial iff it is characterized by non-zero value of some topological charge.

In contrast to 2D systems that can always be described by a finite number of topological charges, 3D systems may possess an infinite number of topological charges. The simplest example of such behavior is a 3D stack of non-interacting 2D toric codes, in which each copy of the toric code contributes a constant number of charges.  We shall see that in our model the number of topological charges is also infinite, but it becomes finite after factoring out quadrupole excitations.

A simple corollary of our classification is the proof that rigid strings and flexible  bilayer strings are the only possible strings in the theory. More precisely, let $P$ be a Pauli operator creating an excited spot $M\subseteq \Lambda_{odd}$ with $m=O(1)$ excitations separated from all other excitations by a distance $R\gg 1$ (we are interested in the limit $R\to\infty$). If $m$ is odd,
Theorem~\ref{thm:bound}  shows that $|P|=\Omega(R^2)$, that is, $P$ cannot be reduced to a constant number of string-like operators for any reasonable definition of a string. On the other hand, if $m$ is even, we will prove that $M$
can be decomposed into a constant number of dipoles and quadrupoles, see Section~\ref{subs:decomposition},
and thus $M$ can be
created using a constant number of rigid strings and flexible  strings.

\subsection{Monopoles and membrane operators}
\label{subs:mem}

Isolated monopoles can be created by operators supported on rectangular-shaped membranes
such that there is one monopole sitting near each of the corners of the membrane, see Fig.~\ref{fig:membrane}.
\begin{figure}
\centerline{
\includegraphics[height=4cm]{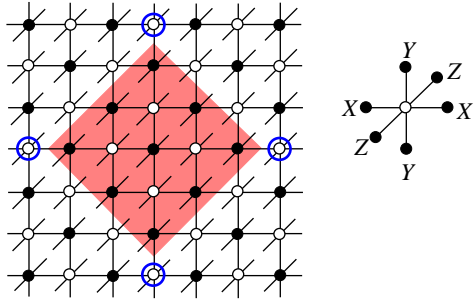}}
\caption{Double circles indicate locations of four isolated monopoles created by a membrane
operator. The membrane $M$ consists of all even (black) sites that belong to the shaded region.
The corresponding membrane operator is  $W(M)=\prod_{u\in M} \, Z_u$. }
\label{fig:membrane}
\end{figure}
For example, consider a set of sites
\be
\label{mem1}
M=\{ (i,j,0)\in \Lambda_{even}\, : \, |i| + |j|\le R\}
\ee
and let
\be
\label{mem2}
W(M)=\prod_{u\in M} Z_u.
\ee
Assume for simplicity that $R$ is even. Then one can easily check that $W(M)$
creates four isolated monopoles at the sites listed below.
\be
W(M) S_v=-S_v W(M) \quad \mbox{iff} \quad
v\in \{ (R+1,0,0), \quad (0,R+1,0), \quad (-R-1,0,0), \quad (0,-R-1,0)\}.
\ee
The operator $W(M)$ is an example of a membrane operator of weight $O(R^2)$.
We shall now prove that the weight of any operator creating an isolated monopole must grow quadratically with $R$.
To illustrate the idea of the proof, let us first consider a 2D analogue of this problem.
 Suppose a Pauli operator $P$ creates an electric charge $e$  in the 2D toric code
separated from other excitations by a distance $R$. The charge  $e$ can be detected
by moving a magnetic charge $m$ over an arbitrary closed loop $\gamma$
encircling $e$. The corresponding string operator $W(\gamma)$ must anti-commute with $P$
and hence the supports of $P$ and $W(\gamma)$ must share at least one qubit.
Choosing a  family of $m=\Omega(R)$ pairwise disjoint loops $\gamma_1,\ldots,\gamma_m$ of increasing radius
that encircle $e$,
we conclude that the support of $P$ shares at least one qubit with each loop $\gamma_i$,
and thus $|P|\ge m=\Omega(R)$. In our 3D model we shall detect isolated monopoles
using a family of $\Omega(R^2)$ closed rigid strings or tetrahedron operators, see Section~\ref{subs:Hformal}.
Since a tetrahedron has a one-dimensional support, we can choose $\Omega(R^2)$ pairwise
disjoint tetrahedra of linear size $R$ enclosing the monopole to be detected (in the proof of the Lemma the tetrahedra are in fact not disjoint but overlap on $O(1)$ qubits). It yields a lower
bound $\Omega(R^2)$ on the weight of the operator $P$ creating the monopole.
Note that using operators with membrane-like support to detect the monopole would not be good enough, since we could only construct only $O(R)$ non-overlapping membranes which enclose the monopole. Similarly,
using closed flexible loops to detect the monopole would not be good enough, since
one can only construct $O(R)$ encircling, non-overlapping, flexible loops in the bilayer that contains the monopole.

\begin{lemma}
\label{lemma:monopole}
Let $P$ be a Pauli operator such that $PS_u=-S_u P$ for some site $u\in \Lambda_{odd}$.
Suppose $PS_v=S_vP$ for all sites $v\ne u$ within distance $R$ from $u$.
Then $|P|\ge cR^2$ for some constant $c$.
\end{lemma}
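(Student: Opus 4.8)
The plan is to detect the lone monopole at $u$ with a large family of tetrahedron operators and to convert each detection event into a lower bound on $|P|$ through a double-counting argument. The basic detection step rests on Eq.~(\ref{O(T)}): if $T$ is a tetrahedron whose solid interior contains $u$ and whose diameter is at most $R$, then $W(T)\sim\prod_{v\in T\cap \Lambda_{odd}}S_v$, and every odd site $v\in T$ other than $u$ lies within distance $R$ of $u$ (since $u\in T$ and $\mathrm{diam}(T)\le R$). By hypothesis $P$ commutes with all of these $S_v$ and anticommutes with $S_u$, so $P$ anticommutes with the product, i.e.\ $PW(T)=-W(T)P$. Because $W(T)$ acts trivially on the four vertices of $T$ (the identity $XYZ\sim I$) and by a single Pauli on each of the $O(R)$ interior qubits of its six edges, this anticommutation forces the support of $P$ to intersect the edge set of $T$.

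The heart of the argument is then to exhibit a family $\calT$ of $\Omega(R^2)$ such tetrahedra, all of a common linear size $s=\Theta(R)$ and all enclosing $u$, whose edge sets have bounded overlap. I would fix one inscribed tetrahedron $T_{00}$ of a cube of side $2s$, centred so that $u$ is an interior point, and translate it by integer combinations of two even vectors $\vec a,\vec b$, setting $T_{ij}=T_{00}+i\vec a+j\vec b$. As $(i,j)$ ranges over the values for which $u$ remains interior---an $\Omega(s)\times\Omega(s)$ grid, since the solid tetrahedron has thickness $\Theta(s)$ in every direction---one obtains $|\calT|=\Omega(s^2)=\Omega(R^2)$ tetrahedra, each still of diameter $\le R$ and each built from a cube with vertices in $\Lambda_{even}$.

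To control the overlap, note that each of the six edges of $T_{ij}$ is a face-diagonal segment lying on a line whose direction is one of the six vectors in Eq.~(\ref{face-diag}); for a fixed direction $\vec d$ this line is pinned down by the two transverse coordinates of the relevant vertex of $T_{ij}$, which depend linearly on $(i,j)$ through a $2\times 2$ matrix built from $\vec a$ and $\vec b$. Choosing $\vec a,\vec b$ generically so that all six of these matrices are nonsingular makes the map $(i,j)\mapsto(\text{the }\vec d\text{-line of }T_{ij})$ injective for each $\vec d$; consequently a fixed qubit $q$ can lie on the $\vec d$-edge of at most one tetrahedron of $\calT$, and hence, summing over the six directions, on the edge set $E(T)$ of at most six tetrahedra. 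The detection step gives $\mathbf 1[\mathrm{supp}(P)\cap E(T)\ne\emptyset]\le|\mathrm{supp}(P)\cap E(T)|$, so
\be
|\calT|\le\sum_{T\in\calT}|\mathrm{supp}(P)\cap E(T)|=\sum_{q\in\mathrm{supp}(P)}|\{T\in\calT:q\in E(T)\}|\le 6\,|P|,
\ee
which yields $|P|\ge|\calT|/6=\Omega(R^2)$, i.e.\ $|P|\ge cR^2$ for a suitable constant $c$.

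I expect the combinatorial construction of $\calT$---simultaneously guaranteeing cardinality $\Omega(R^2)$, the enclosure condition $u\in\mathrm{int}(T)$, the diameter bound, and bounded edge overlap---to be the main obstacle. The counting is tight: $\Omega(R^2)$ tetrahedra of edge length $\Theta(R)$ occupy $\Theta(R^3)$ edge-qubit slots inside a ball that itself contains only $\Theta(R^3)$ qubits, so the family must pack the ball at bounded multiplicity, and it is precisely the linear-algebraic genericity of $\vec a,\vec b$ that makes this packing possible. The remaining facts---that the translated cubes keep all vertices in $\Lambda_{even}$ (take $\vec a,\vec b$ even) and that $u$ stays an interior odd site throughout the grid---are routine to verify.
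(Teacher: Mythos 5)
Your proof is correct and follows essentially the same route as the paper: tetrahedron operators $W(T)$ with $u\in T$ and diameter $O(R)$ serve as detectors that anticommute with $P$ via Eq.~(\ref{O(T)}), and a double-counting over an $\Omega(R^2)$ family of such tetrahedra with $O(1)$ per-qubit overlap gives $|P|\ge cR^2$. The only difference is in how the family is built---you take an explicit two-parameter lattice of translates $T_{ij}=T_{00}+i\vec{a}+j\vec{b}$ with a genericity condition on $\vec{a},\vec{b}$, whereas the paper greedily extracts, from all $\Omega(R^3)$ tetrahedra of size $R/10$ containing $u$, a subfamily in which no two edges are collinear---and your variant has the minor merit of making the per-qubit multiplicity bound ($\le 6$, at most one tetrahedron per face-diagonal direction through a given qubit) fully explicit, a pigeonhole step the paper's argument uses only implicitly.
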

\begin{proof}
Suppose one can construct $m=\Omega(R^2)$ operators $O_1,\ldots,O_m$ with the following properties:
\begin{itemize}
\item Each operator $O_j$ anti-commutes with $P$
\item Each qubit is acted on by at most $c=O(1)$ operators $O_j$
\end{itemize}
Note that each qubit in the support of $P$ can be responsible for anti-commutativity
with at most $c$ operators $O_j$. Since $P$ anti-commutes with all $m$ operators $O_j$,
we conclude that $c|P|\ge m$, that is, $|P|=\Omega(R^2)$. Let us explain how to choose the desired operators $O_j$. Let $B$ be a ball of radius $R$ centered at the site $u$ occupied by the monopole.
For any tetrahedron operator $W(T)$ constructed as in Section \ref{subs:Hformal} such that $u\in T\subseteq B$, one has the desired anti-commutation rule:
\be
\label{mdetect}
P W(T) = -W(T) P \quad \mbox{for all $T\subseteq B$ such that $u\in T$}.
\ee
Here we used the fact that $W(T)$ can be represented as the product of all generators
$S_u$ in the interior of $T$.
It remains to choose sufficiently many tetrahedra with the right properties.
Let $\calT$ be a set of tetrahedra $T$ of size $R/10$
such that $u\in T$ (and thus $T\subseteq B$).
Obviously, $|\calT|=\Omega(R^3)$.
Let us say that a pair of tetrahedra $T,T'\in \calT$ is {\em bad}
iff some edge of $T$ and some edge of $T'$ lie on the same line (and hence they can possibly overlap on $\Omega(R)$ qubits). Otherwise, we shall say that $T,T'$ is a good pair. Note that the supports of a good pair of tetrahedra $T$ and $T'$ overlap on the intersection of non-parallel edges, hence on $O(1)$ qubits. For any fixed $T\in \calT$ there are at most $O(R)$
tetrahedra $T'\in \calT$ such that the pair $T,T'$ is bad.
It follows that there exists a subset of $m=\Omega(R^2)$ tetrahedra
$T_1,\ldots,T_m\in \calT$ such that any pair $T_i,T_j$ is good: we can construct such a subset by
picking a first tetrahedron $T_1$, eliminating all tetrahedra which are bad for $T_1$ and picking
the next tetrahedron from the remaining set etc. At every step one looses $O(R)$ tetrahedra, so starting with $\Omega(R^3)$ tetrahedra allows us to pick $m=\Omega(R^2)$ pairwise good ones.
Choosing $O_j=W(T_j)$, $j=1,\ldots,m$ concludes the proof.
 \end{proof}

Theorem~\ref{thm:bound} is a straightforward generalization of the above lemma.
Indeed,   suppose first that $P_R$ is a Pauli operator.
If $R$ is much larger than the size of the excited spot $M$, one can use tetrahedra of
size $R$ to construct operators $W(T)$ satisfying Eq.~(\ref{mdetect}) and thus prove
the bound $|P_R|=\Omega(R^2)$.
Suppose now that $P$ is an  arbitrary (non-Pauli) operator.
Let $|\psi\ra$ be the ground state and $|\phi\ra=P\, |\psi\ra$ be an excited state
with an isolated monopole, that is, $S_u\, |\phi\ra=-|\phi\ra$ and $S_v\, |\phi\ra=|\phi\ra$
for all sites $v\ne u$ within distance $R$ from $u$.
Decomposing $|\phi\ra$ in the common eigenbasis of the generators $S_v$, $v\in \Lambda_{odd}$,
one can find an eigenstate $|\psi'\ra$ such that $\la \psi'|P|\psi\ra \ne 0$
and $|\psi'\ra$ contains an isolated monopole at $u$. Expanding $P$ in the Pauli
basis we conclude that at least one Pauli operator $P'$ in this expansion must satisfy
$\la \psi'|P'|\psi\ra \ne 0$. Then by Lemma~\ref{lemma:monopole} one must have
$|P'|=\Omega(R^2)$ and thus $P$ acts on $\Omega(R^2)$ qubits.

\subsection{A complete classification}
\label{subs:SSS}

An excited eigenstate of the Hamiltonian Eq.~(\ref{H}) can be described
by a {\em syndrome} $s$ that assigns an eigenvalue $(-1)^{s(u)}$ to any generator $S_u$.
We shall only consider excited states with finite energy (recall that in this section we consider
an infinite lattice). More formally, let $\calP$ be the group of Pauli operators (with possible infinite support).
A function $s\colon \Lambda_{odd} \rightarrow \{0,1\}$ is called a syndrome iff
$s(u)=1$ only for finitely many $u$ and
there exists a Pauli operator $P \in \calP$ such that
\be
\label{syndrome}
S_u P =(-1)^{s(u)} P S_u \quad \mbox{for all $u\in \Lambda_{odd}$}.
\ee
We say that $P$ causes the syndrome $s$.
Note that the commutation between $S_u$ and $P$ is well-defined
even if $P$ has infinite support (although the commutation between Pauli operators
with infinite support is not  well-defined).
One can also regard syndromes  as binary strings
with a finite Hamming weight whose bits are labeled by sites $u\in \Lambda_{odd}$.
A bitwise sum of strings $s,s'$ will be denoted $s\oplus s'$.
We shall group syndromes into equivalence classes called superselection sectors
using the following definition.
\begin{dfn}
\label{def:SSS}
Two syndromes $s$ and $s'$ are equivalent iff $s\oplus s'$ is caused by a Pauli operator $P \in \calP$ with finite support. An equivalence class of syndromes is called a superselection sector. A syndrome is called topologically trivial iff it is equivalent to the trivial syndrome $s(u)=0$ for all $u$.
\end{dfn}

In Section \ref{subs:Texample} we have constructed closed flexible loops
lying in $[111]$-bilayers.  We shall now need to construct such closed loops for all other bilayers.
Let  $t=(t_x,t_y,t_z)$ be any body-diagonal, see Eq.~(\ref{body-diag}),
and $\alpha$ be any  odd integer. Define a plane
\[
\Sigma_{t,\alpha} =\{ (i,j,k)\in \Lambda_{odd} \, : \,
it_x + jt_y + kt_z = \alpha\}.
\]
For any given syndrome $s$ we define a family of  topological $\ZZ_2$ charges as
\be
\label{theta}
\theta_{t,\alpha}(s)=\sum_{u\in \Sigma_{t,\alpha}} \, s(u)\; \modtwo.
\ee
Note that the sum above is well-defined since, by definition, $s$ has finite support.


We will now show that these topological charges $\theta_{t,\alpha}$ form a complete set,
that is,  two syndromes $s$, $s'$ are equivalent, see Definition \ref{def:SSS}, iff $\theta_{t,\alpha}(s)=\theta_{t,\alpha}(s')$
 for all $t$ and $\alpha$. By linearity, it suffices to prove the following.
  \begin{lemma}
\label{lemma:complete}
A syndrome $s$ can be  caused by a Pauli operator $P$ with finite support if and only if for all odd $\alpha$ and all body-diagonals $t$, one has $\theta_{\alpha,t}(s)=0$.
\end{lemma}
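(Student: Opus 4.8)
The statement divides into necessity (``only if'') and sufficiency (``if''); the first is short and the second carries the content.

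\emph{Necessity.} Suppose $P$ has finite support and causes $s$, fix a body-diagonal $t$ and an odd $\alpha$, and consider the $[t]$-bilayer $\Sigma_{t,\alpha-1}\cup\Sigma_{t,\alpha+1}$, whose sandwiched odd plane is exactly $\Sigma_{t,\alpha}$. By Eq.~(\ref{closedTstring}) a closed flexible loop $\gamma$ in this bilayer satisfies $W(\gamma)\sim\prod_{v\in\mathrm{Int}(\gamma)}S_v$, the product running over the generators centered at the sites $v\in\Sigma_{t,\alpha}$ enclosed by $\gamma$; crucially, after the interior plaquettes telescope, $W(\gamma)$ is supported only on the even sites lying on the loop $\gamma$ itself. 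Since $P S_v=(-1)^{s(v)}S_v P$, we get $P\,W(\gamma)=(-1)^{\sum_{v\in\mathrm{Int}(\gamma)}s(v)}\,W(\gamma)\,P$. Now choose $\gamma$ of large enough radius that (i) $\mathrm{Int}(\gamma)$ contains every excited site of $s$ in $\Sigma_{t,\alpha}$, so that $\sum_{v\in\mathrm{Int}(\gamma)}s(v)=\theta_{t,\alpha}(s)$, and (ii) the loop $\gamma$ lies outside the finite support of $P$. Both are possible because $s$ and $P$ are finitely supported while the plane is infinite. By (ii) the disjoint supports force $P$ and $W(\gamma)$ to commute, so by (i) we conclude $\theta_{t,\alpha}(s)=0$.

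\emph{Sufficiency.} Assume all charges vanish; I want to annihilate $s$ with a finite Pauli, which then causes $s$. The natural tools are those of Section~\ref{sec:strings}: rigid strings transport dipoles along face-diagonals, flexible strings transport quadrupoles within a bilayer, and both closed flexible loops (Eq.~(\ref{closedTstring})) and the closed rigid string-nets (tetrahedra) of Section~\ref{subs:Hformal} are products of generators and hence cause the trivial syndrome. The combinatorial plan is to first sweep out all quadrupole content bilayer-by-bilayer with flexible strings, then fuse the remaining dipole endpoints with rigid strings, with the vanishing of $\theta_{t,\alpha}$ guaranteeing that the excitations within every plane $\Sigma_{t,\alpha}$ pair up. It is cleanest to make the bookkeeping precise algebraically. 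Encoding a finite syndrome as $s(x,y,z)=\sum_{(i,j,k)}s(i,j,k)\,x^iy^jz^k$ over $\mathbb{F}_2$, the syndromes of the single-qubit Paulis $Z_u,Y_u,X_u$ (which generate all finite-support syndromes) are the even-monomial multiples of $f_Z=x+x^{-1}+y+y^{-1}$, $f_Y=x+x^{-1}+z+z^{-1}$, $f_X=y+y^{-1}+z+z^{-1}$, with $f_X+f_Y+f_Z=0$. Thus the realizable syndromes are the odd-degree part of the ideal $I=(f_Y,f_Z)$, while the four charge families assemble into the substitution homomorphisms $s\mapsto s(w^{t_x},w^{t_y},w^{t_z})\in\mathbb{F}_2[w^{\pm1}]$, one per body-diagonal $t$. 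A direct computation shows the common zero set $V(I)$ is precisely the union of the four body-diagonal lines $\{(w^{t_x},w^{t_y},w^{t_z})\}$, so ``all charges vanish'' says exactly that $s$ lies in the vanishing ideal of $V(I)$. Sufficiency is therefore the ideal-membership statement that this vanishing ideal agrees with $I$ in odd degrees.

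The main obstacle is exactly this last point: showing that no charge is ``missing,'' i.e.\ that a syndrome vanishing on all four lines genuinely is an $I$-combination rather than merely congruent to one modulo a nilpotent. In the algebraic picture this is the reducedness of $I=(f_Y,f_Z)$, a complete intersection cutting out the four lines through the common point $(1,1,1)$; I would establish it via the Jacobian criterion at a generic point of each line (where the two gradients are independent) together with Cohen--Macaulayness to exclude embedded components, so that generic reducedness upgrades to radicality, whence $J\cap R_1=I\cap R_1$. Equivalently, a Gröbner reduction modulo $\{f_Y,f_Z\}$ brings $s$ to a normal form living on representatives of the four lines, on which the four charges act as independent coordinates and hence force the normal form to zero. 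In the purely combinatorial route the same obstacle surfaces as a termination claim: that the dipole-fusion process cannot halt with a single unpaired monopole, which by Theorem~\ref{thm:bound} no finite operator could ever remove. Ruling this out is precisely where the full hypothesis $\theta_{t,\alpha}(s)=0$ for all $t,\alpha$ is consumed.
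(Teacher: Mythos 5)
Your ``only if'' direction coincides with the paper's own proof: enclose the finite support of $P$ by a sufficiently large closed flexible loop in the $[t]$-bilayer, identify the loop operator with the product of enclosed generators via Eq.~(\ref{closedTstring}), and compute the commutator with $P$ in two ways. For the ``if'' direction, however, you take a genuinely different route. The paper argues combinatorially and constructively: single-qubit $X$- and then $Y$-errors compress $s$ onto the four coordinate lines $L_{ab}=\{(i,j,k)\,:\, i=a,\ j=b\}$, $a,b\in\{0,1\}$, and a four-case analysis of the topmost surviving excitation exhibits in each case a plane $\Sigma_{t,\alpha}$ containing exactly one excited site, so some charge would equal $1$ unless the compressed syndrome is empty; this explicitly produces the finite Pauli causing $s$. (Your parenthetical Gr\"obner remark is essentially this proof in algebraic dress, except that the normal form lives on these coordinate lines, not on the body-diagonal lines of $V(I)$, and the charges eliminate it through a topmost-site argument rather than acting as independent coordinates.) Your primary route instead converts the lemma into commutative algebra over $\mathbb{F}_2[x^{\pm 1},y^{\pm 1},z^{\pm 1}]$: realizable syndromes are the odd part of $I=(f_Y,f_Z)$, vanishing of all charges says $s$ lies in the intersection of the four substitution kernels, and completeness becomes radicality of $I$. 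This is sound, and the steps you deferred do check out: $xy\,f_Z=(x+y)(xy+1)$ and $xz\,f_Y=(x+z)(xz+1)$, so $V(I)$ is exactly the four body-diagonal curves; in characteristic $2$ one has $\partial_x(x+x^{-1})=(1+x^{-1})^2$, so the Jacobian has rank $2$ at every point of $V(I)$ with $x\neq 1$, giving generic reducedness, while the complete-intersection (Cohen--Macaulay) property excludes embedded primes, and radicality over $\overline{\mathbb{F}}_2$ descends to $\mathbb{F}_2$ because $I$, $s$, and the substitution kernels are defined over $\mathbb{F}_2$. What your approach buys is conceptual clarity and generality: it explains the completeness of the charge list as the statement that the charges cut out the components of the characteristic variety, and it would transfer verbatim to other translation-invariant stabilizer Hamiltonians; the cost is reliance on nontrivial commutative algebra (Serre's criterion, unmixedness) with some care needed over non-closed fields in characteristic $2$, and the loss of an explicit construction of the annihilating operator, which the paper's elementary argument provides for free. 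To turn your sketch into a complete proof you still need to write out the two computations above --- the factorization identifying $V(I)$ and the generic rank of the Jacobian --- since both were asserted rather than performed.
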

To prove the ``only if'' direction we note that  any Pauli operator $P$ with finite support can be enclosed by a sufficiently large closed flexible loop lying in any $[t]$-bilayer. The corresponding string operator
$W_{t,\alpha}(\gamma)$ must  commute with $P$ since their supports are disjoint.
 On the other hand, $W_{t,\alpha}(\gamma)$ coincides with the product of generators $S_v$ over all
 plaquettes of the bilayer that are encircled by $\gamma$, see Section~\ref{subs:Texample}.
 Hence the commutation between $P$ and $W_{t,\alpha}(\gamma)$ is controlled by the topological
 charge $\theta_{t,\alpha}(s)$, and we conclude that $\theta_{t,\alpha}(s)=0$.
 To prove the other direction, we manipulate the syndrome by operators with finite support to effectively annihilate it.

\begin{proof}
Suppose $P\in \calP$ has finite support and let $s$ be its syndrome.
Let us show that $\theta_{t,\alpha}(s)=0$.
Since $P$ has finite support, we can take any sufficiently large flexible loop $\gamma$
 lying in the $[t]$-bilayer $\Sigma_{t, \alpha-1}\cup \Sigma_{t, \alpha+1}$ such that $\gamma$
 encloses the support of $P$. Obviously, the corresponding closed-string operator $W_{t,\alpha}(\gamma)$ commutes with $P$, since their supports are disjoint. On the other hand, the  operator $W_{t,\alpha}(\gamma)$
 is a product of the stabilizer generators in its
 interior ${\rm Int}(\gamma)$, Eq.~(\ref{closedTstring}), and thus $P W_{t,\alpha}(\gamma)=(-1)^{\theta_{t,\alpha}(s)}W_{t,\alpha}(\gamma)P$, i.~e.~the closed-string operator picks up the topological charge of $P$. Thus the topological charge must be zero, $\theta_{t,\alpha}(s)=0$.

Now we prove the other direction, namely if for a syndrome $s$ all charges $\theta_{t,\alpha}(s)=0$, then
$s$ can be caused by  a Pauli operator  $P$ with a finite support.
Applying single-qubit $X$-errors we can shift the support of $s$
onto the pair of $[010]$-planes with $y$-coordinates $j=0$ and $j=1$. Applying single-qubit $Y$-errors we can shift the
support of $s$ in each of these planes onto a pair of adjacent $[001]$-lines $i=0$ and $i=1$.
Since the composition of all $X$- and $Y$-errors used above has finite support,
it does not change functions $\theta_{t,\alpha}$ (see the first part of the proof).
 Hence we can assume that $s$ has support only on the union  of  lines
 $L_{00}$, $L_{01}$, $L_{10}$, and $L_{11}$, where
\[
L_{ab}=\{ (i,j,k)\in \Lambda_{odd} \,: \, i=a, j=b\}.
\]
Suppose that $s\ne 0$.
Let $k$ be the largest integer $k$ such that $s(a,b,k)=1$ for some $a,b\in \{0,1\}$.
Let $u=(a,b,k)$.
Consider four cases.

\noindent
{\em Case~1:} $(a,b)=(0,0)$. Then the plane $\Sigma_{t,k}$
with $t=(-1,-1,1)$ contains a single non-zero syndrome bit $s(u)=1$, $u=(0,0,k)$.
It means that $\theta_{t,k}(s)=1$ which is a contradiction.

\noindent
{\em Case~2:} $(a,b)=(0,1)$. Then the plane $\Sigma_{t,-k-1}$
with $t=(1,-1,-1)$  contains a single non-zero syndrome bit $s(u)=1$, $u=(0,1,k)$.
It means that $\theta_{t,-k-1}(s)=1$ which is  a contradiction.

\noindent
{\em Case~3:} $(a,b)=(1,0)$. Then the plane
$\Sigma_{t,-k-1}$ with $t=(-1,1,-1)$ contains a single non-zero syndrome but $s(u)=1$,
$u=(1,0,k)$. It means that $\theta_{t,-k-1}(s)=1$ which is a contradiction.

\noindent
{\em Case~4:} $(a,b)=(1,1)$. Then the plane $\Sigma_{t,k+2}$
with $t=(1,1,1)$  contains a single non-zero
syndrome bit $s(u)=1$, $u=(1,1,k)$.
It means that $\theta_{t,k+2}(s)=1$  which is a contradiction.

Summarizing, we get a contradiction unless $s=0$. It means that the original
syndrome can be caused by a Pauli operator with a finite support.
\end{proof}

\subsection{Decomposition into dipoles and quadrupoles}
\label{subs:decomposition}


Let $s$ be any syndrome (an excited state)
such that the total number of excitations in $s$ is even.
In this section we show how to decompose $s$ into a combination of dipoles
and quadrupoles. To illustrate the usefulness of our formalism we show how to
create a `dislocation' in a rigid string that can be composed only of quadrupole charges located in the neighborhood of the
dislocation.

Let us first determine the configuration of topological charges, Eq.~(\ref{theta}),
corresponding to a single quadrupole, a dipole and a monopole, the results are summarized in Fig.~\ref{fig:topcharges}.

\begin{figure}[htb]
\centerline{
\includegraphics[height=4cm]{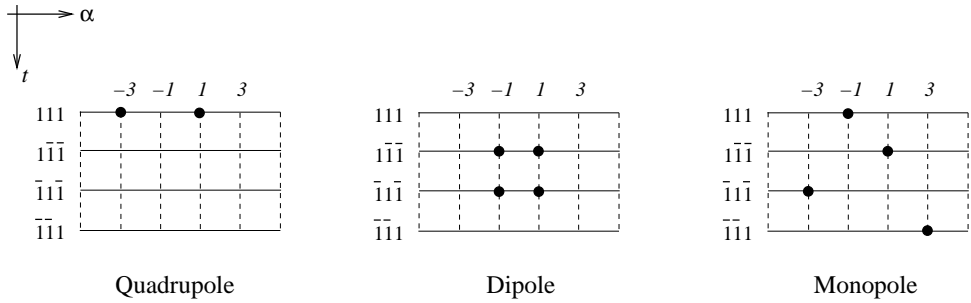}}
\caption{Given a syndrome $s$, the corresponding superselection sector
can be described by a diagram where the black dots indicate non-zero topological $\ZZ_2$ charges $\theta_{t,\alpha}$.
Here $t$ labels body-diagonals and $\alpha$ is an arbitrary odd integer.
Examples of diagrams representing a single quadrupole of type $[111]$, a single dipole of type $[110]$, and a single monopole
located at site $u=(0,-2,1)$
are shown.  For any syndrome $s$ with even (odd) number of excitations, the number of dots on each horizontal line
must be even (odd). For any syndrome $s$ the number of dots with $\alpha=1\modn{4}$ and the number of dots
with $\alpha=-1\modn{4}$ must be even. }
\label{fig:topcharges}
\end{figure}

Consider as an example a quadrupole of type $t=[111]$ that consists of four excitations located at the sites
\be
\label{singleQ}
u_0-\hat{x},
\quad
u_0-\hat{y},
\quad
u_0-\hat{z},
\quad
u_0+\hat{x}+\hat{y}+\hat{z}
\ee
for some site $u_0\in \Lambda_{even}$, see Lemma~\ref{lemma:Tstring}.
Let $s$ be the corresponding syndrome. Let $\alpha$ be the odd integer such that $u_0\in \Sigma_{t,\alpha-1}$. Then $\theta_{t,\beta}(s)=1$ iff $\beta=\alpha \pm 2$.
Consider now some other body-diagonal $t'\ne t$.
Let $\alpha'$ be the odd integer such that $u_0\in \Sigma_{t',\alpha'-1}$.
Then one can easily check that two of the four sites in Eq.~(\ref{singleQ})
belong to the plane $\Sigma_{t',\alpha'-2}$ while the other two sites
belong to the plane $\Sigma_{t',\alpha'}$. Hence $\theta_{t',\beta}(s)=0$
for all $t'\ne t$ and all $\beta$. Similar arguments can be
made to quadrupoles of all other types, see Lemma~\ref{lemma:Tstring}.
We conclude that for any odd integer $\alpha$
and for any body-diagonal $t$ there exists a quadrupole which has only $\theta_{t,\alpha-2}$ and $\theta_{t,\alpha+2}$ as non-zero topological charges, see Fig.~\ref{fig:topcharges}.

Similarly we can determine the configuration of topological charges, Eq.~(\ref{theta}),
corresponding to a single dipole of type $h$. Consider as an example a dipole of type $h=[110]$
(that can be created by a rigid string lying in the $[001]$-plane)  that consists of two excitations located at the sites $u_0-\hat{x}$ and $u_0-\hat{y}$, see Section~\ref{subs:Hformal}. These two sites belong to the same plane $\Sigma_{t,\alpha}$ for the body-diagonals $t=[111]$ and $t=[\bar{1}\bar{1}1]$. Hence the corresponding topological charges $\theta_{t,\beta}$ are zero for all $\beta$.
On the other hand, the sites $u_0-\hat{x}$ and $u_0-\hat{y}$ belong to two different
planes $\Sigma_{t,\alpha\pm 1}$ if $t=[1\bar{1}\bar{1}]$ or $t=[\bar{1}1\bar{1}]$, where $\alpha$ is the even integer such that $u_0\in \Sigma_{t,\alpha}$. Hence we have four non-zero topological charges $\theta_{t,\alpha \pm 1}$ where $t=[1\bar{1}\bar{1}]$ or $t=[\bar{1}1\bar{1}]$, see Fig.~\ref{fig:topcharges}. Note that these are the two body-diagonals which lie in the plane orthogonal to $h$. Similar arguments can be applied to dipoles of all other types.

Now we are ready to prove that any syndrome $s$ with an even number of
excitations can be decomposed into dipoles and quadrupoles. First, we note that for any such syndrome
\be
\label{constr1}
\sum_{\alpha} \theta_{t,\alpha}(s)=\sum_u s(u)=0.
\ee
Secondly, we have an identity
\be
\label{constr2}
\sum_t \; \; \sum_{\alpha = 1\modn{4}} \; \theta_{t,\alpha}(s) =
\sum_t \; \; \sum_{\alpha = -1\modn{4}} \; \theta_{t,\alpha}(s)=
0,
\ee
where $t$ runs over the set of body-diagonals, see Eq.~(\ref{body-diag}).
Indeed, let $u=(i,j,k)\in \Lambda_{odd}$, $t=[t_xt_yt_z]$ be some body-diagonal,
and $u\cdot t\equiv it_x+jt_y+kt_z$.
Since $i+j+k$ is odd, the inner product $u\cdot t$ is odd for all body-diagonals $t$.
One can easily check that $\sum_t u\cdot t=0$, where the sum runs over
all body-diagonals $t$. Hence the number of $t$ such that $u\cdot t=1\pmod{4}$ must
be even. Analogously, the number of $t$ such that $u\cdot t=-1\pmod{4}$ must
be even.  It implies Eq.~(\ref{constr2}).

Let us represent $s$ by a diagram as shown on Fig.~\ref{fig:topcharges}
such that each non-zero charge $\theta_{t,\alpha}$ is represented by a dot.
Combining $s$ with a single quadrupole we can shift any dot along the horizontal axis by distance $4$,
that is, $(t,\alpha) \to (t,\alpha\pm 4)$. It allows us to concentrate all dots in the pair of columns
$\alpha =\pm 1$.
Since $\theta_{t,-1}(s)+\theta_{t,1}(s)=0$, see Eq.~(\ref{constr1}), we are
left only with four charges
\[
\theta_t(s)\equiv \theta_{t,1}(s)=\theta_{t,-1}(s)
\]
labeled by body-diagonals $t$, see Eq.~(\ref{body-diag}).
From Eq.~(\ref{constr2}) we infer that
\[
\sum_t \theta_t(s)=0,
\]
that is, the number of non-zero charges $\theta_t(s)$ must be even.
Any such syndrome $s$ can be decomposed into a single dipole or a pair of dipoles.
Indeed, suppose $\theta_t(s)=1$ for some pair of body-diagonals $t_1,t_2$.
Then $s$ can be created by a single dipole of type $h$, where $h$ is the face-diagonal
orthogonal to $t_1$ and $t_2$, see Fig.~\ref{fig:topcharges} for an example.
In the remaining case, $\theta_t(s)=1$ for all body-diagonals $t$.
Then $s$ can be created by a pair of dipoles of types $[110]$ and $[1\bar{1}0]$.
Summarizing, any syndrome $s$ with even number of excitations can be decomposed
(in the non-unique way) into a combination of quadrupoles and dipoles.

\subsection{Rigid strings with dislocations}
\label{subs:misaligned}
Let us now consider a pair of parallel rigid strings $\gamma_1,\gamma_2$
that are shifted with respect to each other as shown on Fig.~\ref{fig:dislocation}.
We would like to connect the two strings together
obtaining a rigid string with a `dislocation' and
explore what topological charges describe such a dislocation.

Let $W(\gamma_1)$, $W(\gamma_2)$ be the  string operators corresponding to $\gamma_1,\gamma_2$,
see Section~\ref{subs:Hformal}. Recall that $W(\gamma_1)$ creates a dipole near each end-point
of $\gamma_1$. Assume without loss of generality that the dipole located near the dislocation
occupies a pair of sites
\[
u_1=(-1,2,0) \quad \mbox{and} \quad v_1=(0,1,0).
\]
Similarly, $W(\gamma_2)$ creates a dipole that occupies a pair of sites
\[
u_2=(1,0,0) \quad \mbox{and} \quad v_2=(2,-1,0)
\]
and another dipole on the opposite end-point of $\gamma_2$.
Hence an operator  $W(\gamma_1\gamma_2)=W(\gamma_1)W(\gamma_2)$
creates four excitations at sites $u_1,v_1,u_2,v_2$.
Let $s$ be the corresponding syndrome. One can easily check that the only non-zero
topological charges $\theta_{t,\alpha}(s)$ correspond to body-diagonals
$t=[\bar{1}1\bar{1}]$, $t=[1\bar{1}\bar{1}]$ and $\alpha\in \{-3,-1,1,3\}$, see
the diagram shown on Fig.~\ref{fig:qqqq}.
One can easily check that this diagram is equivalent to a pair of quadrupoles
of type $t=[1\bar{1}\bar{1}]$ and a pair of quadrupoles of type $t=[\bar{1}1\bar{1}]$,
see the diagram of a single quadrupole shown on Fig.~\ref{fig:topcharges}.
For any $t$ as above one has one quadrupole in the $[t]$-bilayer $\Sigma_{t,-2}\cup \Sigma_{t,0}$
and one quadrupole in the $[t]$-bilayer $\Sigma_{t,0}\cup \Sigma_{t,2}$.
We shall see in Section~\ref{sec:encoding} that for a finite lattice with periodic boundary
conditions satisfying $g=1$ one has only one $[t]$-bilayer for each body-diagonal $t$
(different planes $\Sigma_{t,\alpha}$ must be identified due to the periodic boundary conditions).
Hence for a finite lattice each pair of quadrupoles of type $t$ can be
paired up with flexible strings and annihilated by the corresponding
string operators. Note that  if the lattice dimensions are pairwise co-prime,
a rigid string aligned with a face-diagonal cannot be closed directly since the corresponding diagonal
fully fills up a two-dimensional plane before it gets closed.
Therefore creating a dislocation
(or several dislocations) might be the only possible way to construct a closed rigid string for such lattice geometry.

\begin{figure}[htb]
\centerline{
\includegraphics[height=4cm]{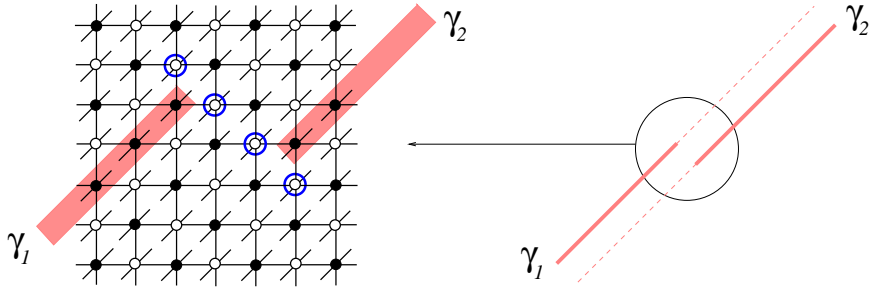}}
\caption{A pair of parallel shifted  rigid strings $\gamma_1$ and $\gamma_2$ of type $[110]$ can be connected
into a single rigid string $\gamma_1\gamma_2$ with a dislocation.
Double circles indicate locations of excitations associated with the dislocation. }
\label{fig:dislocation}
\end{figure}
\begin{figure}[htb]
\centerline{
\includegraphics[height=3cm]{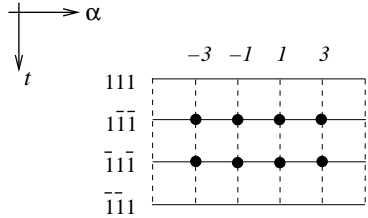}}
\caption{The diagram shows topological charges associated with a dislocation on a rigid string,
see Fig.~\ref{fig:dislocation}.
These charges are equivalent to a pair of quadrupoles of type $t=[1\bar{1}\bar{1}]$
and a pair of quadrupoles of type $t=[\bar{1}1\bar{1}]$ located in adjacent
$[t]$-bilayers.}
\label{fig:qqqq}
\end{figure}

\section{Encoding of a qubit using closed rigid strings}
\label{sec:encoding}

Let us remind the reader of some standard notations pertaining to stabilizer codes, see e.~g.~\cite{BT:mem}. Let $\calP=\la X_1,Z_1,\ldots,X_n,Z_n\ra$ be the $n$-qubit Pauli group.
Given a Pauli operator $P$, its weight $|P|$ is defined as the number of qubits
on which $P$ acts non-trivially (as $X$, $Y$, or $Z$).
A stabilizer code can be defined by an abelian subgroup
${\calS} \subseteq  \calP$  such that $-I\notin \calS$.
The corresponding codespace $\calL$ is spanned by $n$ qubit states invariant under the action of
any element of $\calS$.
Pauli operators commuting with every element of $\calS$
form the centralizer of $\calS$ denoted as
$\calC({\calS})=\{P \in \calP\, : \, PQ=QP \; \forall Q \in \calS\}$.
The elements of the centralizer preserve the codespace $\calL$.
For a stabilizer code encoding $k$ logical qubits the centralizer can be represented
as $\calC(\calS)=\la \calS,\bar{X}_1,\bar{Z}_1,\ldots,\bar{X}_k,\bar{Z}_k\ra$,
where $\bar{X}_i$ and $\bar{Z}_i$ are logical Pauli operators on the $i$-th encoded
qubit.  In this case $\dim{\calL}=2^k$. The distance of the code ${\calS}$ is defined as
minimum weight of a logical operator,
$d=\min_{P \in \calC(\calS)\backslash \calS} |P|$.

If the code encodes more than one qubit, some of them may be protected better than the others.
In this case it may be advantageous to use only a subset of $k'=k-m$ `good' logical qubits to encode information.
We then consider the gauge group $\calG=\la \calS, \bar{X}_1,\bar{Z}_1,\ldots, \bar{X}_m,\bar{Z}_m\ra$
generated by stabilizers and the logical operators acting on the remaining $m$ `bad' logical qubits.
 Pauli operators acting non-trivial
on the good  logical qubits are elements of $\calC(\calS) \backslash \calG$.
Hence the distance of the code in which quantum information is encoded into the `good' subsystem is
 $d_{\calG}=\min_{P \in \calC(\calS)\backslash \calG} |P|$.

In the next subsections, we assume that the lattice has periodic boundary conditions,
\[
\Lambda=\ZZ_{L_x}\times \ZZ_{L_y} \times \ZZ_{L_z},
\]
where $L_\alpha=2p_\alpha$ for some integers $p_x,p_y,p_z$. In addition, we shall focus on the
special case when  $p_x,p_y,p_z$ are odd and pairwise co-prime,
\be
\label{coprime}
\mathrm{gcd}(p_x,p_y)=\mathrm{gcd}(p_y,p_z)=\mathrm{gcd}(p_z,p_x)=1,
\ee
\be
\label{odd}
p_\alpha \modtwo =1, \quad \alpha=x,y,z.
\ee
The purpose of the co-primality constraint Eq.~(\ref{coprime})
is to take advantage of the rigid strings present in the model to maximize the code distance,
see Section~\ref{subs:rclosed}. The constraint Eq.~(\ref{odd}) is introduced to simplify
classification of logical operators and is not essential.

\subsection{Half-filled membrane operators}
\label{subs:half}

In this section we construct a complete set of logical operators for the code $\calS$.
Under conditions Eq.~(\ref{coprime}), Theorem~\ref{thm:1} implies that the code $\calS$ has $k=4$ logical qubits. One can choose the logical Pauli operators on the four encoded qubits using
{\em half-filled} membrane operators which are supported on four disjoint sublattices of $\Lambda_{even}$. Let $a,b,c \in \{0,1\}$ with $a+b+c \modtwo=0$ label the four qubits. For a qubit labeled by $abc$, we can find the logical operators
\bea
\label{logicals}
\bar{\sigma}^x_{abc}& \equiv &  \prod_{j=b \modtwo}\,\prod_{k=c \modtwo} X_{a,j,k}, \nonumber \\
\bar{\sigma}^y_{abc}& \equiv & \prod_{i=a \modtwo} \, \prod_{k=c \modtwo} Y_{i,b,k}, \nonumber \\
\bar{\sigma}^z_{abc}& \equiv & \prod_{i=a \modtwo} \,\prod_{j=b \modtwo} Z_{i,j,c}.
\eea
Note that the logical operators $\bar{\sigma}^{\alpha}_{abc}$ are supported on a sublattice $\Lambda_{abc}$ for which the $x,y$ and $z$ coordinates have parity $a,b$ and $c$ respectively.
For example, the four logical $\bar{\sigma}^x_{abc}$ operators
 lie in one of two $[100]$-planes characterized by fixed $x$-coordinate
 $i=a=0$ or $i=a=1$. For such fixed $[100]$-plane, the choice of $b$ (and thus $c$) determines
 the parity of the $y$-coordinate $j$ and $z$ coordinate $k$,
  see Fig.~\ref{fig:half-filled}. A simple inspection shows that the
   operators $\bar{\sigma}^{\alpha}_{abc}$ commute with
  all generators $S_v$ and thus they are elements of the centralizer $\calC(\calS)$.

\begin{figure}[htb]
\centerline{
\includegraphics[height=5cm]{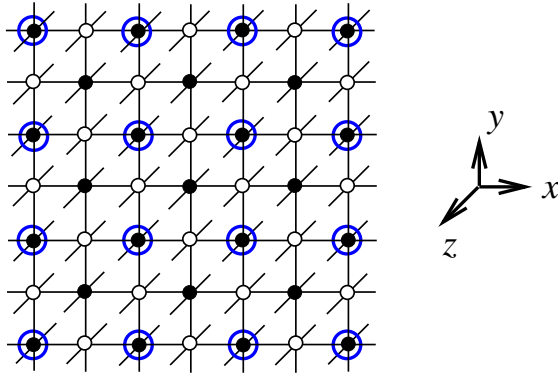}}
\caption{(Color Online) A half-filled membrane operator $\bar{\sigma}^z$ lying in some $[001]$-plane
is the product of Pauli $Z$'s  over all qubits which have a fixed parity of the $x$- and $y$-coordinates
(the qubits indicated by double circles).}
\label{fig:half-filled}
\end{figure}

It is straightforward to see that any pair of logical operators $\bar{\sigma}^{\alpha}_{abc}$ and
$\bar{\sigma}^{\beta}_{a'b'c'}$ commute when $abc \neq a'b'c'$ since they have disjoint supports.
Consider the commutation between, say, $\bar{\sigma}^x_{abc}$ and $\bar{\sigma}^y_{abc}$ and
 notice that they overlap on $p_z$ qubits. Since $p_z$ is odd, Eq.~(\ref{odd}), these logical operators thus anti-commute on an odd number of qubits, hence they anti-commute. The choice for odd $p_x$ and $p_y$ similarly ensures the proper commutation rules between $\bar{\sigma}^x_{abc}$ and $\bar{\sigma}^z_{abc}$ etc.

It is important to note that one can translate half-filled membrane operators by multiplication with the stabilizer generators. For example, one can shift the logical operator $\bar{\sigma}^x_{000}$ to the parallel $[100]$-plane $(i,j,k)=(2, j=0 \modtwo,k=0\modtwo)$ by multiplying it with stabilizer generators at the sites $(i,j,k)=(1,0 \modtwo,0 \modtwo)$. So we can translate any half-filled membrane operator lying in a plane orthogonal to the basis
vector $\hat{\alpha}$  by any vector $2 k \hat{\alpha}$ for integer $k$ and get an equivalent logical operator.

Although the logical Pauli operators introduced in Eq.~(\ref{logicals}) have a membrane-like geometry, it does not mean that any logical operator of the code $\calS$ has membrane-like geometry as well. Indeed, it may be possible to transform the half-filled plane operators or some combination of them into a logical operator with a smaller support (e.g. string-like) by multiplying them with the stabilizers. In fact, we will see in the next section that some combination of these logical operators are obtainable as closed strings winding around the lattice,  see Lemma \ref{lemma:111}.

\subsection{Logical operators associated with closed flexible strings}
\label{subs:fclosed}

Recall that flexible strings are confined to $[t]$-bilayers $\Sigma_{t,\alpha-1}\cup \Sigma_{t,\alpha+1}$,
where $t$ is a body-diagonal and  $\Sigma_{t,\alpha}$ is a $[t]$-plane that includes all sites $u=(i,j,k)$ satisfying $t_x i+ t_y j+t_zk=\alpha$. Note that for periodic boundary conditions and lattice dimensions satisfying Eq.~(\ref{coprime})
the sum $t_x i+ t_y j+t_zk$ is defined only modulo $\mathrm{gcd}(L_x,L_y,L_z)=2\mathrm{gcd}(p_x,p_y,p_z)=2g=2$.
Hence the entire 3D lattice can be considered as a single $[t]$-bilayer embedded into the $3$-torus
and folded into a closed surface.
One can easily check that any such bilayer has topology of the two-dimensional torus\footnote{For example, one can
check that any  $[t]$-bilayer is a closed orientable surface with Euler characteristic $\chi=0$ since it admits  covering by the regular hexagonal lattice.
It follows that this surface has  genus $g=1$, that is, it has the  topology of the $2$-torus.}.
It follows that flexible strings can be closed in a topologically non-trivial way by winding around the torus.
We would like to characterize the corresponding string operators in terms of the logical operators Eq.~(\ref{logicals}).
Let us define the following combination of the logical operators:
\be
\label{Zs12}
\bar{Z}_1= \bar{\sigma}^{x}_{000} \bar{\sigma}^{x}_{011}\bar{\sigma}^{x}_{101}\bar{\sigma}^{x}_{110}, \quad \bar{Z}_2  = \bar{\sigma}^{z}_{000} \bar{\sigma}^{z}_{011}\bar{\sigma}^{z}_{101}\bar{\sigma}^{z}_{110}.
\ee
In this section we will show that $\bar{Z}_1$, $\bar{Z}_2$, and $\bar{Z}_1\bar{Z}_2$
are the only logical operators that correspond to closed flexible strings
winding around the lattice. On the other hand, elements of $\calS$ correspond to contractible
flexible loops, see Section~\ref{subs:Texample}. Hence the subgroup of $\calC(\calS)$ generated by
closed (trivial or non-trivial) flexible strings can be described as
\be
\label{gauge1}
\calG'=\la \calS,\bar{Z}_1,\bar{Z}_2\ra.
\ee
\begin{lemma}
\label{lemma:111}
The group $\calG'$ is the group generated by
all string operators $W(\gamma)$ associated with closed flexible
 strings $\gamma$.
\end{lemma}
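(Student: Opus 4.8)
The plan is to prove two inclusions: (i) every operator $W(\gamma)$ associated with a closed flexible string $\gamma$ lies in $\calG'$, and (ii) each generator of $\calG'$ is a product of closed flexible string operators. Throughout I work under the standing assumptions Eq.~(\ref{coprime})--(\ref{odd}), so that for each body-diagonal $t$ there is a single $[t]$-bilayer, $\calG'$ is the group of Eq.~(\ref{gauge1}), and by the footnote each such bilayer is a $2$-torus.

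The main tool is a reduction to $\ZZ_2$ homology of the bilayer. For a contractible flexible loop, Eq.~(\ref{closedTstring}) already expresses $W(\gamma)$ as a product of plaquette operators $B_p=S_v$, hence $W(\gamma)\in\calS\subseteq\calG'$. More generally, if $\gamma$ and $\gamma'$ are homologous flexible loops in the same $[t]$-bilayer, their symmetric difference bounds a union of plaquettes, so (using $K_e^2=I$) we get $W(\gamma)W(\gamma')\sim\prod_p B_p\in\calS$. Thus the class $[W(\gamma)]$ in $\calC(\calS)/\calS$ depends only on the homology class of $\gamma$, and $\gamma\mapsto[W(\gamma)]$ is a homomorphism out of $H_1(\text{bilayer};\ZZ_2)\cong\ZZ_2^2$. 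Consequently, modulo $\calS$, the closed flexible strings of a single bilayer are generated by the operators of two winding cycles, and those of all four bilayers by at most eight operators.

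For inclusion (i) I would fix a convenient pair of winding cycles in the $[111]$-bilayer, evaluate the product $\prod_e K_e$ of link operators $K_e=\sigma^\alpha_u\sigma^\alpha_v$ along each, and simplify modulo $\calS$ by translating the string through the bilayer (i.e.\ multiplying by plaquettes). The aim is to show each winding operator equals, up to a stabilizer, a product $\bar Z_1^{a}\bar Z_2^{b}$. Recall that $\bar Z_1$ is the all-$X$ operator on the slab of two adjacent $[100]$-planes $i\in\{0,1\}$ and $\bar Z_2$ the all-$Z$ operator on the slab $k\in\{0,1\}$; these are precisely the ``full'' membranes one should recover. The remaining three bilayers are then handled by the lattice symmetries permuting and reflecting the coordinate axes, which permute the body-diagonals; in addition one must check that the resulting operators coincide modulo $\calS$, so that the a priori eight winding operators collapse onto the rank-two group $\langle\bar Z_1,\bar Z_2\rangle$. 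This collapse is the main obstacle: it requires exhibiting enough relations modulo $\calS$ between winding strings of different bilayers, and it is where the specific structure of $K_e$ and the particular combinations in Eq.~(\ref{Zs12}) are forced.

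For inclusion (ii), $\calS$ is generated by the $S_v=B_p$, each a contractible flexible loop, so $\calS$ lies in the group generated by closed flexible strings; and the computation above simultaneously realizes $\bar Z_1$ and $\bar Z_2$ (modulo $\calS$) as winding flexible string operators. Combining (i) and (ii) yields $\calG'=\langle W(\gamma):\gamma\text{ closed flexible}\rangle$, proving the lemma.
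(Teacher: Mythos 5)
Your proposal sets up the homology reduction correctly (contractible flexible loops are products of plaquettes $B_p=S_v$ and hence lie in $\calS$; homologous loops in a bilayer give the same class in $\calC(\calS)/\calS$; each of the four bilayers contributes at most a $\ZZ_2^2$ worth of classes), and this matches the scaffolding of the paper's argument. But the proof stops exactly where the real work begins: you never identify which element of $\calC(\calS)/\calS$ a winding cycle actually represents. You write that you ``would'' evaluate $\prod_e K_e$ along a winding cycle and reduce it modulo $\calS$, and you yourself flag the collapse of the a priori eight winding operators onto $\langle\bar{Z}_1,\bar{Z}_2\rangle$ as ``the main obstacle'' --- but that collapse \emph{is} the content of the lemma, and it is left unresolved. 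Since both of your inclusions (i) and (ii) rest on this missing identification, the proposal as it stands is an outline rather than a proof.

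The paper closes this gap without any direct evaluation of string operators and without any cross-bilayer relations, by exploiting the non-degeneracy of the commutation pairing between $\calC(\calS)/\calS$ and the logical operators. Concretely, it first proves Eq.~(\ref{OOrules}): for a closed flexible string $\gamma$ with winding parities $(w_x,w_y,w_z)$ one has $W(\gamma)\bar{\sigma}^\alpha_{abc}=(-1)^{w_\alpha}\bar{\sigma}^\alpha_{abc}W(\gamma)$, established by a parity count of the crossings of $\gamma$ with the half-filled membrane supporting $\bar{\sigma}^\alpha_{abc}$. Combined with the constraint $w_x+w_y+w_z=0\modtwo$ of Eq.~(\ref{wnc}) and the standard stabilizer fact that an element of $\calC(\calS)$ commuting with \emph{every} logical operator lies in $\calS$, this pins down $[W(\gamma)]$ as a function of the winding parities alone: $(0,0,0)$ gives $\calS$, $(1,1,0)$ gives $\bar{Z}_2$, $(0,1,1)$ gives $\bar{Z}_1$, and $(1,0,1)$ gives $\bar{Z}_1\bar{Z}_2$, irrespective of which bilayer $\gamma$ lives in --- so the collapse you worried about is automatic, and the specific combinations in Eq.~(\ref{Zs12}) emerge as the unique classes with the required commutation pattern. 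To complete your own route you would need either (a) this pairing argument, or (b) an honest explicit reduction of a winding string operator to $\bar{Z}_1^a\bar{Z}_2^b$ times stabilizers, together with the existence of winding strings realizing two independent parity vectors (e.g.\ a string alternating steps $+\lambda^x,-\lambda^y$ closes after $4p_xp_y$ steps with winding numbers $(p_y,p_x,0)$, which by Eq.~(\ref{odd}) has parities $(1,1,0)$); without one of these, the lemma is not proved.
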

\begin{proof}
First, recall that stabilizer generators can be regarded as  homologically trivial flexible loops,
see Section~\ref{subs:Texample}. In addition, it is clear that $W(\gamma) \in \calC(\calS)$ for any closed
flexible  string $\gamma$.
Let us first prove that
\be
\label{OOrules}
W(\gamma) \bar{\sigma}^{\alpha}_{abc} =(-1)^{w_\alpha} \bar{\sigma}^{\alpha}_{abc} W(\gamma),
\ee
where $(w_x,w_y,w_z)\in H_1(S^1\times S^1\times S^1,\ZZ_2)$ represents the homological class of $\gamma$. In other words,
$w_\alpha$ is the parity of the number of times the $\alpha=x,y,z$ coordinate of the string $\gamma$ winds around $\ZZ_{L_{\alpha}}$. Using the lattice symmetries, it suffices to consider the case when $\alpha=z$ and $\gamma$ lies in the $[111]$-bilayer.
Let $M_{abc}$ be the support of $\bar{\sigma}^z_{abc}$ (a half-filled membrane) and
$\pi$ be the $[001]$-plane containing $M_{abc}$.
 Let $\gamma=(u_0,u_1,\ldots,u_t=u_0)$. Recall that $\gamma$ is a sequence of steps $\lambda^x,\lambda^y,\lambda^z$ with alternating signs, see the definition of $\lambda^{\alpha}$ in Eq.~(\ref{lambdas}) in Section~\ref{subs:Texample}. Without loss of generality, the string never makes two consecutive steps that undo each other such that $u_{j+2}=u_j$ for some $j$.
We shall say that a site $u_j$ is a crossing point iff $u_j\in \pi$ and $u_{j+1}=u_j\pm \lambda^z$ (and hence both $u_j$, $u_{j+1}$ belong to the plane $\pi$). Note that exactly one of the sites $u_j,u_{j+1}$ belongs to the half-filled membrane $M_{abc}$. Since the steps
$(u_{j-1},u_j)$ and $(u_{j+1},u_{j+2})$ must be $\pm \lambda^x$ or $\pm \lambda^y$, the string operator $W(\gamma)$ anti-commutes with $\bar{\sigma}^z_{abc}$ at exactly one of the sites $u_j,u_{j+1}$. Taking into account that the sites $u_{j-1}$ and $u_{j+1}$ are on the opposite sides
of $\pi$, we infer that the number of sites at which $W(\gamma)$ anti-commutes with
$\bar{\sigma}^z_{abc}$ and the winding number $w_z$ have the same parity.

Since each pair of consecutive steps $u\rightarrow u+\lambda^\alpha-\lambda^\beta$
preserves the sum of the coordinates $i+j+k$, the winding numbers
of any closed flexible  string must obey the constraint
\be
\label{wnc}
w_x + w_y+ w_z=0 \modtwo.
\ee
If all $w_\alpha$ are even, then Eq.~(\ref{OOrules}) implies that $W(\gamma)$ commutes with all logical operators, that is, $W(\gamma)\in \calS$ and hence $W(\gamma)$ is a trivial loop.
When $W(\gamma)\in \calC(\calS) \backslash \calS$ for some closed flexible  string $\gamma$, exactly two of the numbers $w_x,w_y,w_z$ must be odd. Consider for example the case when $w_x,w_y$ are odd and $w_z$ is even. Combining Eqs.~(\ref{OOrules},\ref{wnc}) we infer that
$W(\gamma)$ anti-commutes with all logical operators $\bar{\sigma}^x_{abc}$, $\bar{\sigma}^y_{abc}$ and commutes with all logical operators $\bar{\sigma}^z_{abc}$. There is only one operator in $\calC(\calS) \backslash \calS$ which has this property and this is $\bar{Z}_2=\prod_{abc} \bar{\sigma}^z_{abc}$. Similarly, when $w_x$ is even and $w_y,w_z$ is odd we infer that $W(\gamma)$
should be $\bar{Z}_1=\prod_{abc} \bar{\sigma}^x_{abc}$. Thus the group generated by all closed flexible strings $W(\gamma)$ equals the group $\la \calS,\bar{Z}_1,\bar{Z}_2\ra$. Note that the logical action of non-trivially closed flexible strings is only determined by a topological property, namely their winding numbers.
\end{proof}

One may ask how the minimum weight of nontrivially closed flexible strings scales as a function of $n$, where $n$ is the total number of qubits.  Let $\gamma_{min}$ be the shortest closed flexible string which is topologically non-trivial (i.e. non-contractible to a point). As was mentioned above, one can consider the entire 3D lattice $\Lambda$ as a single $[t]$-bilayer $\Sigma_t$
associated with some body-diagonal $t$ which is embedded into the $3$-torus and folded into a $2$-torus.
Note that $\Sigma_t$ has area of order $n$ since it contains $n$ qubits covering the bilayer with density of order $1$.
Then Loewner's torus inequality implies that the shortest non-contractible loop on the torus $\Sigma_t$ has length $|\gamma_{min}|\le c\sqrt{n}$ for some constant $c$.
 It means that the minimum weight of non-trivial closed flexible strings cannot grow faster than $c\sqrt{n}$.
In fact, one can show that for a particular sequence of lattice dimensions $p_x,p_y,p_z=\Theta(n^\frac13)$
 satisfying Eqs.~(\ref{coprime},\ref{odd})
 we also have a matching lower bound $|\gamma_{min}|=\Omega(\sqrt{n})$, see~\cite{leemhuis:mthesis}.

\subsection{Logical operators associated with closed rigid strings}
\label{subs:rclosed}

Consider as an example a rigid string associated with the face-diagonal $h=[110]$, that is, a sequence of sites $\gamma=(u_0,u_1,\ldots,u_m)$ where
$u_{i+1}=u_i+h$ for all $i$, see Definition~\ref{dfn:Hstring}.
In order to get a closed string, $u_m=u_0$, one must have
$m=0\modn{L_x}$ and $m=0\modn{L_y}$. The smallest positive $m$ satisfying these conditions
is the least common multiple of $L_x$ and $L_y$, that is,
\be
m=\frac{L_x L_y}{\mathrm{gcd}(L_x,L_y)} = \frac{4p_x p_y}{2\mathrm{gcd}(p_x,p_y)} =2p_xp_y.
\ee
Thus the string $\gamma$ has to wind $p_x$ times around the $y$-axis ($p_y$ times around the $x$-axis) before it gets closed. Such a string $\gamma$ completely fills up the $xy$-plane that contains the site $u_0$, that is, $\gamma$ is actually a closed two-dimensional membrane.
The corresponding string operator $W(\gamma)$ can be expressed as
$W(\gamma)=\bar{\sigma}^z_{000}\bar{\sigma}^z_{110}$ or $W(\gamma)=\bar{\sigma}^z_{101}\bar{\sigma}^z_{011}$
depending on whether $\gamma$ lies in $[001]$-plane with even and odd $z$-coordinate.
This operator has support on a {\em fully-filled membrane} since it involves all qubits lying in
some $[001]$-plane.
One can similarly construct logical operators associated with other closed rigid strings
$\gamma$. Such operators generate a subgroup
\be
\label{gauge2}
\calG''= \la  \bar{\sigma}^z_{000}\bar{\sigma}^z_{110},\;
\bar{\sigma}^z_{101}\bar{\sigma}^z_{011}, \;
\bar{\sigma}^y_{000}\bar{\sigma}^y_{101},\;
\bar{\sigma}^y_{011}\bar{\sigma}^y_{110},\;
\bar{\sigma}^x_{000}\bar{\sigma}^x_{011},\;
\bar{\sigma}^x_{110}\bar{\sigma}^x_{101}\ra.
\ee
By comparing Eq.~(\ref{gauge1}) and Eq.~(\ref{gauge2}) we conclude that string operators
associated with closed rigid and closed flexible strings pairwise commute, that is,
$PQ=QP$  for all $P\in \calG'$ and $Q\in \calG''$.
It allows us to choose a subsystem encoding with two logical qubits and two gauge qubits
such that  string operators associated with closed flexible  strings act only on the gauge qubits.
For example, let us complement $\bar{Z}_1,\bar{Z}_2$ defined in Eq.~(\ref{Zs12}) to a complete
set of logical operators as
\[
\ba{rclrcl}
\bar{X}_1 &=& \bar{\sigma}^{z}_{000} \bar{\sigma}^{z}_{011}\bar{\sigma}^{z}_{110}
&
\bar{Z}_1&=& \bar{\sigma}^{x}_{000} \bar{\sigma}^{x}_{011}\bar{\sigma}^{x}_{101}\bar{\sigma}^{x}_{110} \\
 \bar{X}_2&=& \bar{\sigma}^{x}_{101} &
\bar{Z}_2  &=& \bar{\sigma}^{z}_{000} \bar{\sigma}^{z}_{011}\bar{\sigma}^{z}_{101}\bar{\sigma}^{z}_{110}\\
\bar{X}_3 &=& \bar{\sigma}^x_{000} \bar{\sigma}^x_{011} &
\bar{Z}_3 &=& \bar{\sigma}^z_{000} \bar{\sigma}^z_{110} \\
\bar{X}_4 &=& \bar{\sigma}^x_{000} \bar{\sigma}^x_{110} &
\bar{Z}_4 &=& \bar{\sigma}^z_{000} \bar{\sigma}^z_{011}\\
\ea
\]
One can check that $\bar{X}_i,\bar{Z}_i$ obey the commutation rules of the Pauli operators
on four qubits.  We can now regard qubits $1,2$ as gauge qubits (which do not encode any information)
while the remaining qubits $3,4$ are the logical qubits of the code.
The corresponding subsystem code has a gauge group
\[
\calG=\la \calS,\bar{Z}_1,\bar{Z}_2,\bar{X}_1,\bar{X}_2\ra.
\]
Lemma~\ref{lemma:111} implies that string operators associated with closed flexible strings
act only on the gauge qubits.
Interestingly, the logical operators $\bar{X}_3$ and $\bar{Z}_3$ correspond to fully-filled membranes, while the logical operators $\bar{X}_4$ and $\bar{Z}_4$ correspond to a pair
of half-filled membranes occupying two adjacent $[100]$- and $[001]$-planes respectively.


In order to find the distance $d_{\calG}$ of this subsystem code one has
to minimize the weight of the logical operators $\bar{X}_3,\bar{X}_4,\bar{Z}_3,\bar{Z}_4$ by multiplying them with
stabilizers and gauge operators. This distance must obey the general upper bound $d_{\calG}=O(L^2)$, where
$L$ is the smallest of the lattice dimensions, see Theorem~1${}^*$ in~\cite{BT:mem}.
Computing the distance $d_{\calG}$  leads to a highly non-trivial optimization problem.
To illustrate the potential difficulties let us assume that the lattice dimensions obey one extra constraint
$p_y=p_x+2$. In this case one can construct a closed rigid string $\gamma$ lying in some $[001]$-plane
that winds around the lattice only once
by creating a single dislocation, see Fig.~\ref{fig:dislocation}. As was explained in Section~\ref{subs:misaligned},
the quadrupoles located in the neighborhood of the dislocations can be annihilated by pairing them up
with flexible strings $\delta$. The total length of these flexible strings is  $|\delta|=O(\sqrt{n})$, see Section~\ref{subs:fclosed}.
It allows us to construct a logical operator $W=W(\gamma)W(\delta)$
whose support has geometry of a string-net
--- a collection of rigid and flexible strings interconnected with each other. The total length of the string-net
$\gamma\cup \delta$ grows at most as $O(\sqrt{n})$. Note that the logical operator $W$
anti-commutes with the fully-filled membrane operator $\bar{X}_3$. Indeed,  we can assume that the
fully-filled membrane corresponding to $\bar{X}_3$ crosses the rigid string $\gamma$ at exactly one qubit
and hence $W$ anti-commutes with $\bar{X}_3$. On the other hand, $\bar{X}_3$ commutes with
the flexible string operators $W(\delta)$, see Eq.~(\ref{OOrules}), assuming that the end-points
of $\delta$ are sufficiently far from the membrane corresponding to $\bar{X}_3$.
It follows that $W$ anti-commutes with $\bar{X}_3$ and thus $W$ cannot be in the gauge group, that is,
$d_{\calG}\le |W|=O(\sqrt{n})$.

In the general case the minimum-length string-nets corresponding to closed rigid strings
are more difficult to analyze. Whether the distance obeys the upper bound
$d_{\calG}=O(\sqrt{n})$ for an arbitrary choice of the dimensions $p_x,p_y,p_z$
is an interesting open problem.

\section{Discussion and open problems}
\label{sec:open}


We would like to remark that our model suggests natural generalizations of models other than the surface code to the third dimension. Consider for example the qudit surface codes \cite{BB:qsc}
defined on a lattice with qudits with prime-dimension $d$ which realizes a 2D $\ZZ_d$ gauge theory. To get a three-dimensional extension, we can take the lattice $\Lambda=\Lambda_{even} \cup \Lambda_{odd}$ defined above and for every $u \in \Lambda_{odd}$ we define a generator
$S_u =X_{u-\hat{x}}\,  X^{-1}_{u+\hat{x}}\,  Y_{u-\hat{y}}\,  Y^{-1}_{u+\hat{y}} \, Z_{u-\hat{z}}\, Z^{-1}_{u+\hat{z}}+h.c.$ where $X$ and $Z$ are generalized Pauli-operators with
$X^d=Z^d=I$ and $X^{-1}=X^{\dagger}$ and $Z^a X^b=\exp(2\pi i a \cdot b/d) X^b Z^a$.
One can easily check that $S_u$ are mutually-commuting (use
 $[A \otimes B,B^{\dagger} \otimes A^{\dagger}]=0$ which holds for any generalized Pauli operators $A,B$). An open question is also how one can extend the quantum double models based on non-Abelian gauge groups \cite{kitaev:anyons} to the third dimension in a similar fashion.

A natural question is whether the studied 3D model exhibits thermal stability such that the
relaxation time of error corrected logical operators grows as a function of the lattice size.
We believe that similar to the 2D toric code, topological order in our model is lost at any non-zero temperature.
Consider for example a simplified noise model with only $Z$ errors. In this case one can consider the dynamics of
excitations created in any $[001]$-plane separately. In any given $[001]$-plane  one may expect a gas (whose density depend on the temperature) of dipole strings whose ends freely oscillate along diagonal lines. It is favorable energetically for dipole strings to combine to former longer oscillating strings which diagonally wind over the lattice and eventually can become the logical operator $\bar{Z}_3$ (the fully-filled membrane lying in the $[001]$-plane). In other words, one expects that the thermal behavior of some of these excitations is like those of a 1D Ising model which has been folded into a  2D plane.

\section*{Acknowledgements}
BMT and SB acknowledge support by the DARPA QUEST program under contract number HR0011-09-C-0047. BMT and SB would like to thank Dave Bacon for first introducing this spin model to them and they acknowledge insightful discussions concerning properties of the model with Panos Aliferis, Dave Bacon and Alexei Kitaev. BL acknowledges the financial support and the warm hospitality from IBM Research and its employees.


\appendix

\section{Ground state degeneracy}
\label{sec:appA}

In this Appendix we prove Theorem~\ref{thm:1}. It suffices
to show that the stabilizer group $\calS$ has $n-k$ independent generators with $k=4g$
and $\calS$ does not contain $-I$.

Let us begin by introducing some notations.
Let $M$ be any finite set and $\calB=\{0,1\}$.  A
set of bit strings of length $|M|$ in which bits are labeled by elements of $M$
will be denoted $\calB(M)$.
Given a string $s\in \calB(M)$ and $u\in M$ let $s_u\in \calB$ be the bit corresponding to $u$.
For any $s,t\in \calB(M)$ let $s\oplus t\in \calB(M)$ be the bit-wise XOR of $s$ and $t$.
A subset $\calC\subseteq \calB(M)$ is called a (binary) linear subspace
iff $s\oplus t\in \calC$ for any $s,t\in \calC$.
The dimension of a linear subspace $\calC$ defined over the binary field will be denoted $\dim{\calC}$.
Note that $|\calC|=2^{\dim{\calC}}$ for any subspace $\calC$.
A subspace $\calC$ that includes only constant strings $s=(00\ldots 0)$  and $s=(11\ldots 1)$
will be referred to as a repetition code on $M$.

Recall that $g=\mathrm{gcd}(p_x,p_y,p_z)$. We shall begin in Section~\ref{subs:g=1} by proving Theorem~\ref{thm:1}
in the special case
$g=1$ since in this case the proof is much simpler.
It will allow us to introduce necessary machinery for the general proof, see Sections~\ref{subs:k},\ref{subs:-I}.

\subsection{The special case $g=1$}
\label{subs:g=1}

Define a linear subspace $\calC=\calC(p_x,p_y,p_z)\subseteq \calB(\Lambda_{odd})$  describing linear dependencies among the generators $S_u$,
\[
\calC(p_x,p_y,p_z)=\{ t\in \calB(\Lambda_{odd})\, : \, \prod_{u\in \Lambda_{odd}} \, S_u^{t_u} =\pm  I \}.
\]
By abuse of notations we shall omit the dependence on $p_x,p_y,p_z$ whenever it is not important.
Clearly the stabilizer group $\calS$ has $n-\dim{\calC}$ independent generators
and thus the number of logical qubits is
\[
k= \dim{\calC}
\]
Using the explicit form of generators $S_u$ we infer that $t\in \calC$
iff $t$ satisfies parity checks
\bea
t_{u-\hat{x}} \oplus t_{u+\hat{x}} \oplus t_{u-\hat{y}} \oplus t_{u+\hat{y}} &=& 0 \label{checksXY} \\
t_{u-\hat{x}} \oplus t_{u+\hat{x}} \oplus t_{u-\hat{z}} \oplus t_{u+\hat{z}} &=& 0 \label{checksXZ} \\
t_{u-\hat{y}} \oplus t_{u+\hat{y}} \oplus t_{u-\hat{z}} \oplus t_{u+\hat{z}} &=&0 \label{checksYZ}
\eea
for every site $u\in \Lambda_{even}$.
\begin{lemma}
\label{lemma:Code}
For any  $t\in \calC$  one has
\be
\label{period2g}
t_{u}=t_{u+2g\hat{x}}=t_{u+2g\hat{y}}=t_{u+2g\hat{z}} \quad \mbox{for all $u\in \Lambda_{odd}$}.
\ee
If $g=1$ then the converse is true, that is, any string $t\in \calB(\Lambda_{odd})$ satisfying Eq.~(\ref{period2g})
belongs to $\calC$.
\end{lemma}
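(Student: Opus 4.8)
The plan is to prove the two implications separately, disposing of the easy converse first and then the period-$2g$ statement, which carries the real content.

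For the converse (the case $g=1$), suppose $t\in\calB(\Lambda_{odd})$ satisfies Eq.~(\ref{period2g}) with $g=1$, i.e.\ $t$ is invariant under the shifts $2\hat{x},2\hat{y},2\hat{z}$. Then for every even site $u$ and every axis $\alpha$ the sites $u\pm\hat{\alpha}$ differ by $2\hat{\alpha}$, so $t_{u-\hat{\alpha}}=t_{u+\hat{\alpha}}$ and each pair-sum $t_{u-\hat{\alpha}}\oplus t_{u+\hat{\alpha}}$ vanishes. In particular the three pair-sums are equal, so the parity checks Eqs.~(\ref{checksXY})--(\ref{checksYZ}) hold and $t\in\calC$. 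This uses $g=1$ essentially: for $g>1$ invariance under $2g\hat{\alpha}$ no longer forces the pair-sums to vanish, which is exactly why the converse is stated only for $g=1$.

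For the forward direction I would first note that membership in $\calC$ is equivalent to the statement that the three discrete pair-sums agree at every even site $w$, that is, the common value $f(w)=t_{w-\hat{x}}\oplus t_{w+\hat{x}}=t_{w-\hat{y}}\oplus t_{w+\hat{y}}=t_{w-\hat{z}}\oplus t_{w+\hat{z}}$ is well defined; this merely restates Eqs.~(\ref{checksXY})--(\ref{checksYZ}). Equating the $x$- and $z$-expressions for $f$ at the even site $w=u+\hat{x}$ (with $u$ odd) gives the key local identity
\[
t_{u+2\hat{x}}\oplus t_{u}=t_{u+\hat{x}+\hat{z}}\oplus t_{u+\hat{x}-\hat{z}},
\]
together with its $xy$- and $yz$-analogues. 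The hard part is to turn this local relation into global periodicity, and the mechanism I would use is a transfer of the torus periods between axes through planar separability. Fix a plane of constant $y$ and lift $t$ to a genuinely $(2p_x,2p_y,2p_z)$-periodic function on $\ZZ^3$, on which the (local) checks still hold. In this plane introduce the face-diagonal basis $e_1=\hat{x}+\hat{z}$, $e_2=\hat{x}-\hat{z}$ and write the odd-site coordinates as $(a,b)$, ranging over all of $\ZZ^2$. The identity above says precisely that the mixed second difference of $t$ along $e_1,e_2$ vanishes, and the standard fact that a vanishing mixed difference on $\ZZ^2$ yields a separation then gives $t(a,b)=C(a)\oplus D(b)$.

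Now I would invoke the torus relation $t(u+2p_z\hat{z})=t(u)$: since $2p_z\hat{z}=p_z(e_1-e_2)$ it reads $C(a+p_z)\oplus D(b-p_z)=C(a)\oplus D(b)$, so $C(a+p_z)\oplus C(a)=D(b)\oplus D(b-p_z)$ is independent of both $a$ and $b$, hence equal to a single constant $\kappa$; thus $C(a+p_z)=C(a)\oplus\kappa$ and $D(b+p_z)=D(b)\oplus\kappa$. Shifting instead by $2p_z\hat{x}=p_z(e_1+e_2)$ gives
\[
t(u+2p_z\hat{x})=C(a+p_z)\oplus D(b+p_z)=C(a)\oplus D(b)\oplus\kappa\oplus\kappa=t(u),
\]
so $t$ is invariant under $2p_z\hat{x}$. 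The $xy$-plane argument gives invariance under $2p_y\hat{x}$, and the torus itself gives invariance under $2p_x\hat{x}$. Since the $x$-shifts fixing $t$ form a subgroup of $\ZZ$ containing $2p_x,2p_y,2p_z$, they contain $\mathrm{gcd}(2p_x,2p_y,2p_z)=2g$, so $t$ has period $2g$ along $\hat{x}$. By the symmetry of the generators $S_u$ under permuting the (axis, Pauli) pairs the same holds along $\hat{y}$ and $\hat{z}$, which is Eq.~(\ref{period2g}). The step I expect to be the crux is this period-transfer, in which the $z$-period of the torus is converted into an $x$-period of $t$: it is where the face-diagonal (rigid-string) geometry enters, and it is precisely what makes the answer depend on $\mathrm{gcd}(p_x,p_y,p_z)$ rather than on the individual dimensions.
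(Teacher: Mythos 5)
Your proof is correct, and it reaches Eq.~(\ref{period2g}) by a genuinely different route than the paper. The paper first asserts that the elementary checks Eqs.~(\ref{checksXY})--(\ref{checksYZ}) imply step-$h$ checks Eqs.~(\ref{checksXYh})--(\ref{checksYZh}) for arbitrary $h$, then applies the Chinese Remainder Theorem to pick $h$ with $2h\equiv 0\pmod{L_y}$ and $2h\equiv 2g_{xy}\pmod{L_x}$, where $g_{xy}=\mathrm{gcd}(p_x,p_y)$: the $y$-pair in the step-$h$ check cancels by torus periodicity, giving $t_u=t_{u+2g_{xy}\hat{x}}$ directly, and likewise $t_u=t_{u+2g_{xz}\hat{x}}$; B\'ezout's identity applied to $2g_{xy},2g_{xz}$ then yields the period $2g$. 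You never form generalized checks and never invoke CRT: instead you extract the structural fact that, in face-diagonal coordinates within each constant-$y$ (or constant-$z$) plane, $t$ separates as $C(a)\oplus D(b)$, and you use that separation to convert the full torus periods $2p_z$ and $2p_y$ into $\hat{x}$-periods of $t$, finishing with B\'ezout on $2p_x,2p_y,2p_z$. The mechanism is the same in both proofs --- the checks let periodicity along one axis propagate into another axis, and a gcd argument concludes --- but your separation lemma buys genuine self-containedness: the paper's step-$h$ checks are claimed to follow ``easily'' by linear combinations, yet actually proving them (including the parity bookkeeping needed when the CRT solution $h$ is even, so the step-$h$ check must be centered at odd rather than even sites) amounts to an induction that is essentially your mixed-difference argument. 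What the paper's route buys in exchange is stronger intermediate statements (periods $2g_{xy}\hat{x}$ and $2g_{xz}\hat{x}$ in one stroke, versus your weaker $2p_y\hat{x}$, $2p_z\hat{x}$), though this costs nothing since the final gcd is $2g$ either way. Your $g=1$ converse is the same observation as the paper's, which phrases it as the four repetition codes on $\Lambda_{100},\Lambda_{010},\Lambda_{001},\Lambda_{111}$ satisfying all checks.
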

In the special case $g=1$
Lemma~\ref{lemma:Code} implies that the subspace $\calC$ can be regarded as a product of four
repetition codes defined on the four non-overlapping sublattices
$\Lambda_{100}$, $\Lambda_{010}$, $\Lambda_{001}$, and $\Lambda_{111}$, where
\[
\Lambda_{abc}=\{ (i,j,k)\in \Lambda\, : \, i=a\modtwo, \quad j=b\modtwo, \quad k=c\modtwo\}.
\]
Since the product of generators $S_u$ over each of these sublattices gives $+I$ we conclude
that $\dim{\calC}=4$ and $-I\notin \calS$ thus proving Theorem~\ref{thm:1}.
In the rest of the section we prove Lemma~\ref{lemma:Code}.
\begin{proof}
By taking linear combinations of the parity checks Eqs.~(\ref{checksXY}-\ref{checksYZ}) one
easily gets
\bea
t_{u-h\hat{x}} \oplus t_{u+h\hat{x}} \oplus t_{u-h\hat{y}} \oplus t_{u+h\hat{y}} &=& 0 \label{checksXYh} \\
t_{u-h\hat{x}} \oplus t_{u+h\hat{x}} \oplus t_{u-h\hat{z}} \oplus t_{u+h\hat{z}} &=& 0 \label{checksXZh} \\
t_{u-h\hat{y}} \oplus t_{u+h\hat{y}} \oplus t_{u-h\hat{z}} \oplus t_{u+h\hat{z}} &=&0 \label{checksYZh}
\eea
for any non-negative integer $h$.
Let us try to choose $h$ such that the above parity checks become equivalent to the ones
in Eq.~(\ref{period2g}). We shall use the following well-known fact.
\begin{prop}[\bf Chinese Remainder Theorem]
\label{prop:CRT}
A system of equations
\bea
h&=& a_1 \modn{n_1} \nn \\
h&=& a_2 \modn{n_2} \nn
\eea
has a solution $h$ iff $a_1=a_2 \modn{n}$, where $n=\mathrm{gcd}(n_1,n_2)$.
\end{prop}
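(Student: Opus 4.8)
The plan is to prove the two directions of the stated biconditional separately, reducing the simultaneous system to a single linear congruence whose solvability is governed by B\'ezout's identity. Throughout write $n=\mathrm{gcd}(n_1,n_2)$.

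For the ``only if'' direction I would argue by divisibility, which is immediate. Suppose $h$ satisfies both congruences, so $n_1 \mid (h-a_1)$ and $n_2 \mid (h-a_2)$. Since $n$ divides both $n_1$ and $n_2$, it divides both $h-a_1$ and $h-a_2$, hence it divides their difference
\[
(h-a_2)-(h-a_1)=a_1-a_2.
\]
Thus $a_1=a_2\modn{n}$, as required.

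For the ``if'' direction, assume $a_1=a_2\modn{n}$. First I would parameterize every solution of the first congruence as $h=a_1+n_1 s$ with $s\in\ZZ$; substituting into the second congruence reduces solvability of the whole system to finding an integer $s$ with
\[
n_1 s = a_2-a_1 \modn{n_2}.
\]
I would then invoke the classical solvability criterion for a single linear congruence: $n_1 s \equiv b \modn{n_2}$ has a solution iff $\mathrm{gcd}(n_1,n_2)\mid b$. Applying it with $b=a_2-a_1$, and noting that $\mathrm{gcd}(n_1,n_2)=n$ divides $a_2-a_1$ by hypothesis, produces the desired $s$; then $h=a_1+n_1 s$ solves the system.

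The only substantive ingredient — the one place where I would be careful — is this single-congruence criterion, which I would establish directly from B\'ezout. Writing $n=n_1 u+n_2 v$ for integers $u,v$ and assuming $n\mid b$, say $b=nc$, the choice $s=uc$ gives $n_1 s = n_1 u c = (n-n_2 v)c = b - n_2(vc)$, so $n_1 s\equiv b\modn{n_2}$; conversely, if $n_1 s\equiv b\modn{n_2}$ then $n_1 s-b$ is a multiple of $n_2$, and since $n$ divides both $n_1 s$ and $n_2$ it must divide $b$. This B\'ezout step is the heart of the argument, and the rest is bookkeeping. I do not anticipate any genuine obstacle, as this is precisely the classical two-modulus form of the Chinese Remainder Theorem.
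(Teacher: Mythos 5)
Your proof is correct and complete. Note that the paper itself offers no proof of this proposition at all --- it is invoked as a well-known classical fact (the two-modulus form of the Chinese Remainder Theorem) and used as a black box in the proof of Lemma~\ref{lemma:Code} --- so there is no internal argument to compare against; your reduction to a single linear congruence $n_1 s \equiv a_2 - a_1 \modn{n_2}$ and the solvability criterion via B\'ezout's identity is the standard, fully rigorous way to establish it, with both directions handled correctly.
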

Let $g_{\alpha\beta}=\mathrm{gcd}(p_\alpha,p_\beta)$.
Note that $\mathrm{gcd}(L_\alpha,L_\beta)=2g_{\alpha\beta}$.
By Proposition~\ref{prop:CRT} we can
choose integers $h$, $h'$  such that
\[
\left\{ \ba{rcll} 2h&=& 0 &\modn{L_y} \\
2h&=& 2g_{xy} &\modn{L_x} \\ \ea \right. \quad \mbox{and} \quad
\left\{ \ba{rcll} 2h' &=& 0 &\modn{L_z} \\
2h' &=& 2g_{xz} &\modn{L_x} \\ \ea \right.
\]
Using these $h$ and $h'$ in resp. the parity checks Eq.~(\ref{checksXYh}) and Eq.~(\ref{checksXZh}) we get
\bea
t_u&=&t_{u+2g_{xy}\hat{x}}, \nn \\
t_u&=&t_{u+2g_{xz}\hat{x}},\nn
\eea
for all $u\in \Lambda_{odd}$.
By B$\acute{{\rm e}}$zout's identity, the smallest positive integer obtained as
an integer linear combination of $2g_{xy}$ and $2g_{xz}$ is equal to
$\mathrm{gcd}(2g_{xy},2g_{xz})=2\mathrm{gcd}(g_{xy},g_{xz})=2g$, that is, we get
$t_u=t_{u+2g\hat{x}}$ for all $u\in \Lambda_{odd}$.
The remaining parity checks in Eq.~(\ref{period2g}) are obtained analogously.
In the special case $g=1$ the parity checks Eq.~(\ref{period2g})
specify a product of four repetition codes defined on the non-overlapping sublattices
$\Lambda_{100}$, $\Lambda_{010}$, $\Lambda_{001}$, and $\Lambda_{111}$.
One can easily check that such repetition codes satisfy all parity checks
Eqs.~(\ref{checksXY}-\ref{checksYZ}). Hence in this case Eq.~(\ref{period2g})
is a necessary and sufficient condition for $t\in \calC$.
\end{proof}

\subsection{Computing the number of logical qubits for $g>1$}
\label{subs:k}

We can use Lemma~\ref{lemma:Code} to establish an isomorphism between
$\calC(p_x,p_y,p_z)$ and $\calC(g,g,g)$ as follows.
Let $\Lambda'=\ZZ_{2g}\times \ZZ_{2g}\times \ZZ_{2g}$.
Given a string $t\in \calB(\Lambda_{odd})$ obeying Eq.~(\ref{period2g}),
let $t'\in \calB(\Lambda'_{odd})$
be a string defined through $t'_{\alpha,\beta,\gamma}=t_{i,j,k}$, where
$\alpha=i\modn{2g}$, $\beta=j\modn{2g}$, and $\gamma=k\modn{2g}$.
Lemma~\ref{lemma:Code} tells us that
$t\in \calC(p_x,p_y,p_z)$ iff Eq.~(\ref{period2g}) holds and $t'\in \calC(g,g,g)$.
Hence in order to prove the identity $k=4g$ it suffices to show that
\be
\label{ggg1}
\dim{\calC(g,g,g)}=4g.
\ee
Let us begin by considering a two-dimensional version of this problem.
Consider a 2D square lattice $\Omega=\ZZ_{2g}\times \ZZ_{2g}$.
A site $u=(i,j)\in \Omega$ is called even (odd) iff $i+j$ is even (odd).
Let $\Omega_{even}$ and $\Omega_{odd}$ be the even and odd sublattices.
Define a linear subspace
\be
\label{2dCode}
\calC(g,g)=\{ t\in \calB(\Omega_{odd})\, : \, t_{i+1,j}\oplus t_{i-1,j} \oplus t_{i,j+1}\oplus t_{i,j-1}=0
\quad \mbox{for all $(i,j)\in \Omega_{even}$}\}.
\ee
We claim that
\be
\label{gg}
\dim{\calC(g,g)}=2g.
\ee
Indeed, considering $t=\{t_{i,j}\}$ as a binary matrix of size $2g\times 2g$ (with zero entries at all
even cells), one can put an arbitrary binary strings at the  rows $i=0,1$
since there are no parity checks that have support only in these rows. Using the parity checks
Eq.~(\ref{2dCode}) one can uniquely fill up the remaining rows $i=2,3,\ldots,2g-1$.
Furthermore, using the generalized parity checks Eq.~(\ref{checksXYh}) with $h=g$
one arrives at $t_{i+2g,j}=t_{i,j}$, that is, expressing the row $i=0$ in terms of the rows
$i=2g-2,2g-1$ one will always come back to the original binary string $\{t_{0,j}\}$.
Similarly, expressing the row $i=1$ in terms of the rows $i=2g-1,0$ one will always
come back to the original binary string $\{t_{1,j}\}$.   Hence $t\in \calC(g,g)$ is uniquely determined by $2g$ bits
$t_{0,j}$, $(0,j)\in \Omega_{odd}$, and  $t_{1,j}$, $(1,j)\in \Omega_{odd}$.
These  bits can assume arbitrary values. It proves Eq.~(\ref{gg}).
The above arguments also show that we can ignore the parity checks
centered at the first and the last rows, that is, $\calC(g,g)$ can be specified by the parity checks
\be
\label{2dCode'}
t_{i+1,j}= t_{i,j+1} \oplus t_{i,j-1} \oplus t_{i-1,j},
\quad j\in \ZZ_{2g}, \quad 1\le i\le 2g-2, \quad (i,j)\in \Omega_{even}.
\ee

\noindent
{\em Remark:}  For the later use we show examples of strings $t\in \calC(g,g)$ on Fig.~\ref{fig:CA}.
One can easily check that these strings and their translations generate the entire subspace
$\calC(g,g)$.

\begin{figure}[htb]
\centerline{
\includegraphics[height=4cm]{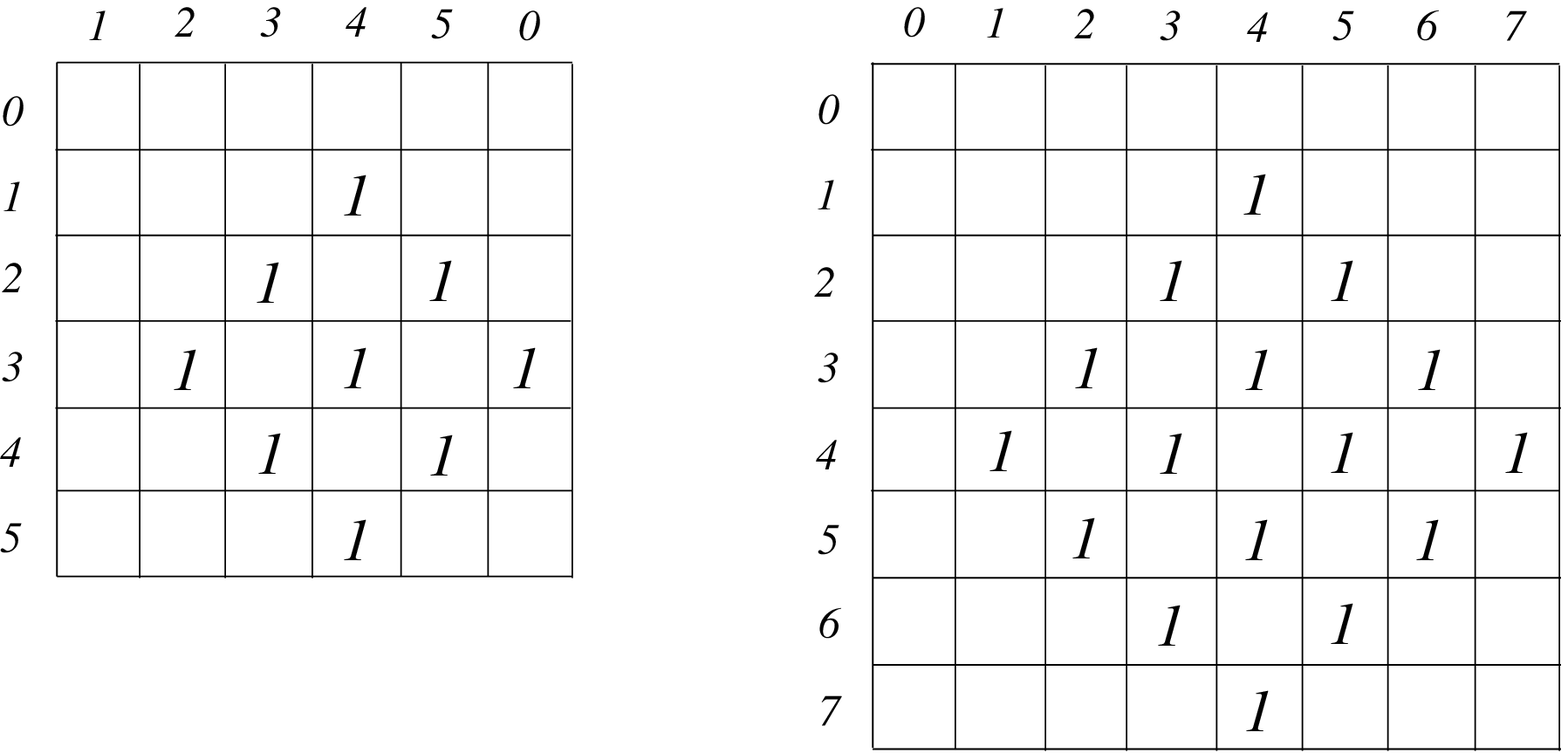}}
\caption{Examples of a string $t\in \calC(g,g)$ for $g=3,4$ (only non-zero bits are shown).
Note that these strings have a single non-zero bit in the first pair of rows.
It follows that the entire subspace $\calC(g,g)$ is spanned by
horizontal and vertical translations of the shown string.}
\label{fig:CA}
\end{figure}

Now we can easily prove Eq.~(\ref{ggg1}).  Consider any string $t\in \calC(g,g,g)$.
We note that for any site $u\in \Lambda_{even}$ the parity checks
Eqs.~(\ref{checksXY},\ref{checksXZ},\ref{checksYZ}) are linearly dependent, namely,
\[
(t_{u-\hat{x}} \oplus t_{u+\hat{x}} \oplus t_{u-\hat{y}} \oplus t_{u+\hat{y}})\oplus
(t_{u-\hat{x}} \oplus t_{u+\hat{x}} \oplus t_{u-\hat{z}} \oplus t_{u+\hat{z}}) \oplus
(t_{u-\hat{y}} \oplus t_{u+\hat{y}} \oplus t_{u-\hat{z}} \oplus t_{u+\hat{z}})\equiv 0.
\]
Using the parity checks Eq.~(\ref{checksXY}) we conclude that
the restriction of $t$ onto any $xy$-plane must belong to $\calC(g,g)$.
Hence any $xy$-plane potentially contributes $2g$ to the dimension of $\calC(g,g,g)$.
We claim that $t$ is actually uniquely determined
by the restriction of $t$ onto the pair of  $xy$-planes $(i,j,k=0)$ and $(i,j,k=1)$.
Indeed, applying the parity checks $t_{u+\hat{z}}=t_{u-\hat{z}}\oplus t_{u-\hat{x}} \oplus t_{u+\hat{x}}$,
or the equivalent parity checks $t_{u+\hat{z}}=t_{u-\hat{z}}\oplus t_{u-\hat{y}} \oplus t_{u+\hat{y}}$,
one can uniquely extend $t$ to  the remaining $xy$-planes $k=2,3,\ldots,2g-1$. Moreover, this
extension automatically satisfies the parity checks Eq.~(\ref{checksXY}) in every
$xy$-plane $k=2,3,\ldots,2g-1$.
Indeed, consider any pair of sites $v=u+\hat{z}$. Suppose $u$ belongs to the $k$-th
$xy$-plane and we have already checked that the parity checks Eq.~(\ref{checksXY})
are satisfied in all $xy$-planes $0,1,\ldots,k$. Let us check that
\be
\label{checkit}
t_{v-\hat{x}} \oplus t_{v+\hat{x}} \oplus t_{v-\hat{y}} \oplus t_{v+\hat{y}}=0.
\ee
Indeed, applying  Eq.~(\ref{checksYZ}) to the sites $u\pm \hat{x}$ we get
\be
\label{aux1}
t_{v\pm \hat{x}} = t_{u+\hat{y} \pm \hat{x}} \oplus t_{u-\hat{y} \pm \hat{x}} \oplus t_{u-\hat{z} \pm \hat{x}}.
\ee
Similarly, applying Eq.~(\ref{checksXZ}) to the sites $u\pm \hat{y}$ we get
\be
\label{aux2}
t_{v\pm \hat{y}} = t_{u+\hat{x} \pm \hat{y}} \oplus t_{u-\hat{x} \pm \hat{y}} \oplus t_{u-\hat{z} \pm \hat{y}}.
\ee
Adding together Eqs.~(\ref{aux1},\ref{aux2}) we arrive at
\[
t_{v-\hat{x}} \oplus t_{v+\hat{x}} \oplus t_{v-\hat{y}} \oplus t_{v+\hat{y}}=
t_{u-\hat{z}-\hat{x}} \oplus t_{u-\hat{z}+\hat{x}} \oplus t_{u-\hat{z}-\hat{y}} \oplus t_{u-\hat{z}+\hat{y}}=0
\]
since we assumed that the parity checks Eq.~(\ref{checksXY})
are satisfied in the $(k-1)$-th $xy$-plane.

To summarize, we have shown that $\calC(g,g,g)$ is isomorphic to the direct sum
of two copies of $\calC(g,g)$ associated with the pair of $xy$-planes $(i,j,k=0)$ and
$(i,j,k=1)$. Hence Eq.~(\ref{ggg1}) follows from Eq.~(\ref{gg}).

\subsection{Proving that $-I\notin \calS$ for the case $g>1$}
\label{subs:-I}

It remains to prove that $-I\notin \calS$.
Consider any string $t\in \calC$ such  that
\be
\label{S(t)}
S(t)\equiv\prod_{u\in \Lambda_{odd}} S_u^{t_u}=\epsilon(t) \, I, \quad \epsilon(t)=\pm 1.
\ee
Since the generators $S_u$ pairwise commute and $S_u^2=I$ we conclude that
$\epsilon(t)$ is a homomorphism from $\calC$ to $\ZZ_2$, that is,
$\epsilon(t\oplus t')=\epsilon(t)\epsilon(t')$ for any $t,t'\in \calC$. Thus it suffices to check that $\epsilon(t)=+1$
only for basis vectors $t\in \calC$.
To this end we define a two-dimensional version of the generators in Eq.~(\ref{generator}),
namely,
\[
S_u'=X_{u-\hat{x}} X_{u+\hat{x}} Y_{u-\hat{y}} Y_{u+\hat{y}},
\]
where  $u\in \Lambda_{odd}$. Note that $S_u'$ is the restriction of
the 3D generator $S_u$  onto the $xy$-plane that contains the site $u$.
Define also an operator
\be
\label{S'(t)}
S'(t):=\prod_{u\in \Lambda_{odd}} (S_u')^{t_u}.
\ee
For any $xy$-plane $\Omega$ a generator $S_u$ with $u\notin \Omega$ acts on
qubits of $\Omega$ only by $I$ or $Z$. Hence the cancelation in Eq.~(\ref{S(t)}) is possible
only if $S'(t)$ is a Pauli operator of $Z$-type. More specifically, one must have
\[
S'(t)=\epsilon(t)\, \prod_{v\in \Lambda_{even}}\, Z_v^{s_v}
\]
for some string $s\in \calB(\Lambda_{even})$. Since each $Z_v$
can be only obtained from the product $X_v Y_v$ or $Y_v X_v$, we conclude that
\be
\label{st}
s_v=t_{v-\hat{x}}\oplus t_{v+\hat{x}} = t_{v-\hat{y}}\oplus t_{v+\hat{y}}.
\ee
To determine the phase factor $\epsilon(t)$
let us choose a specific
ordering of generators in Eq.~(\ref{S'(t)})
such that generators $S'_u$ with odd $x$-coordinate  are always on the left
of generators with even $x$-coordinate. (Since the generators $S'_u$ pairwise commute, the order
does not matter.) Then any qubit $v$ with even $x$-coordinate can only produce
$Z_v$ through a multiplication $X_v Y_v$. Any qubit $v$ with odd $x$-coordinate can only produce
$Z_v$ through a multiplication $Y_v X_v$. Hence we arrive at
\[
\epsilon(t)=i^{n_{even} -n_{odd}},
\]
where $n_{even}$ and $n_{odd}$ is the number of sites $v\in \Lambda_{even}$ with
$s_v=1$ such that  $v$ has even and odd $x$-coordinate respectively.
If we sum Eq.~(\ref{st}) over all sites $v$ in $\Lambda_{even}$ with odd $x$-coordinate, we see that $n_{odd}$ must be even.
Hence we arrive at
\be
\label{phase}
\epsilon(t)=i^{n_{even}+n_{odd}}=i^{|s|},
\ee
where $|s|$ is the Hamming weight of $s$.
Using the identity $|a\oplus b|=a+b-2ab$ which holds for any $a,b\in \{0,1\}$ and Eq.~(\ref{st})
one can easily get
\bea
|s| &=&\sum_{u\in \Lambda_{110}\cup \Lambda_{101}}\, |t_{u-\hat{x}} \oplus t_{u+\hat{x}}| +
\sum_{u\in \Lambda_{000}\cup \Lambda_{011}}\, |t_{u-\hat{y}} \oplus t_{u+\hat{y}}| \nn \\
&=& \sum_{u\in \Lambda_{001}\cup \Lambda_{010}}\,
4|t_u| -2 \sum_{u\in \Lambda_{001}\cup \Lambda_{010}}\, t_u t_{u+2\hat{x}} + t_u t_{u+2\hat{y}}.\nn
\eea
Now it follows from Eq.~(\ref{phase}) that
\[
\epsilon(t)=(-1)^{f(t)}, \quad f(t)=\sum_{u\in \Lambda_{001}\cup \Lambda_{010}}\, t_u t_{u+2\hat{x}} + t_u t_{u+2\hat{y}} \modtwo.
\]
Since $\epsilon$ is a homomorphism,
it suffices to check that $f(t)=0$  for all basis vectors of $\calC$.
We shall use  the isomorphism between $\calC(p_x,p_y,p_z)$ and $\calC(g,g,g)$
described in the beginning of Section~\ref{subs:k}.
Let $t'\in \calC(g,g,g)$ be the image of $t$ under this isomorphism.
Using Eq.~(\ref{period2g}) one can easily get
\[
f(t)=\frac{p_xp_yp_z}{g^3} f(t').
\]
Here $f(t')$ is calculated for generators $S_u$ defined on the lattice
$\Lambda'=\ZZ_{2g}\times \ZZ_{2g} \times \ZZ_{2g}$.
Hence it suffices to check that $f(t')=0$ for all basis vectors $t'\in \calC(g,g,g)$.
As was mentioned in Section~\ref{subs:k}, the restriction of $\calC(g,g,g)$
onto any $xy$-plane $\Omega\cong \ZZ_{2g}\times \ZZ_{2g}$ coincides with the subspace $\calC(g,g)$,
see Eq.~(\ref{2dCode}).
Let $r\in \calC(g,g)$ be the basis vector defined on Fig.~\ref{fig:CA}.
Since $r$ is symmetric under $90^\circ$ rotations of the lattice, one can easily get
\[
\sum_{u=(2i,j)\in \Omega_{odd}}\, r_u r_{u+2\hat{x}} + r_u r_{u+2\hat{y}}=0 \modtwo.
\]
Hence any $xy$-plane yields even contribution to $f(t')$, that is, $f(t')$ is even.
Thus $f(t)$ is even for all $t\in \calC(p_x,p_y,p_z)$ which
implies  $\epsilon(t)=1$.
 We have proved that $-I\notin \calS$ for $g>1$.

\section{Proof of zero-temperature stability}
\label{sec:stability}
In this Appendix we explicitly prove that the Hamiltonian  Eq.~(\ref{H}) defines a stable phase of matter at zero temperature, that is, the spectral gap of the Hamiltonian does not close in the presence of weak local perturbations and its ground state degeneracy is not lifted up to exponentially small corrections. Specifically, we will check the sufficient conditions for stability derived in~\cite{BHM:stab}.
We also show that the model Eq.~(\ref{H})  has a unique ground state if defined on an infinite lattice.

The following lemma proves two statements about Pauli operators $P$ whose support can be bounded by a sufficiently small box $B$. First, if such an operator $P$ commutes with
all generators $S_u$ then $P$ must be in the stabilizer group (up to a phase factor). Secondly,
if $P$ is a stabilizer then one can express $P$ as a product of generators
$S_u$ using only those generators supported inside the box $B$.  The first statement implies that the
stabilizer
code $\calS=\la S_u, \; u\in \Lambda_{odd}\ra$ has a macroscopic distance (growing at least linearly with the smallest of the lattice dimensions). It provides the first stability condition called TQO-1 in Ref.~\cite{BHM:stab}.
The second statement is equivalent to the second stability condition called TQO-2 in Ref.~\cite{BHM:stab}, see Lemma~2.1 in the above reference. As was shown in~\cite{BHM:stab}, conditions TQO-1,2 together are sufficient for zero-temperature stability.

\begin{lemma}
\label{lemma:stability}
Suppose a Pauli operator $P$ commutes with all generators
$S_u$, $u \in \Lambda_{odd}$ and suppose the support of $P$ can be bounded by
a box $B$ of size $l_x\times l_y\times l_z$ with
$l_\alpha \le L_\alpha -3$ for all $\alpha=x,y,z$.
Then $P$ is a stabilizer up to a phase factor.
Moreover, there exists a subset of generators $M\subseteq B\cap \Lambda_{odd}$
such that $P= \prod_{u\in M} S_u$ up to a phase factor.
\end{lemma}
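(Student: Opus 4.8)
The plan is to prove the two assertions separately, since they have different flavors: ``$P$ is a stabilizer'' is a statement about the absence of logical content (condition TQO-1), while the local generation ``$M\subseteq B\cap\Lambda_{odd}$'' is the stronger locality statement (TQO-2).

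For the first assertion I would use a duality/detection argument rather than any cleaning. Since $P$ commutes with every $S_u$, it lies in the centralizer $\calC(\calS)$, so its only possible nontrivial content is logical. For $g=1$ the logical group is generated by the half-filled membrane operators $\bar\sigma^\alpha_{abc}$ of Eq.~(\ref{logicals}), and for fixed $abc$ the three operators $\bar\sigma^x_{abc},\bar\sigma^y_{abc},\bar\sigma^z_{abc}$ obey the single-qubit Pauli algebra (they pairwise anticommute, as recorded after Eq.~(\ref{logicals})). The key point is that each $\bar\sigma^\alpha_{abc}$ is supported in a single lattice plane orthogonal to the axis $\alpha$ and may be translated by $2k\hat\alpha$ into any parallel plane of the same parity by multiplying with stabilizers, without changing its commutator with $P$. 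Because $l_\alpha\le L_\alpha-3$ for every axis, the complement of $B$ contains at least three consecutive planes orthogonal to each axis, hence a plane of either parity; I can therefore translate each of the operators $\bar\sigma^\alpha_{abc}$ into a plane disjoint from $\mathrm{supp}(P)\subseteq B$. Disjoint supports force $P$ to commute with all of them, so $P$ carries trivial logical charge on each encoded qubit and hence $P\in\calS$ up to a phase (the general $g$ case is identical, using the analogous complete set of planar logical operators).

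For the ``moreover'' part I would pass to an explicit reduction of $P$ by generators centred inside $B$. Two local facts drive it. First, a purity observation: if the two planes just above the top occupied $z$-plane $k$ are empty, then commuting $P$ with the generators centred at $(i,j,k+1)$, whose only leg reaching plane $k$ is a vertical $Z$-leg, forces $a_{(i,j,k)}=0$, so the topmost occupied plane of any element of $\calC(\calS)$ is automatically of pure $Z$-type. Second, a two-dimensional forcing: a pure-$Z$ operator confined to a single plane that commutes with the in-plane generators satisfies the four-neighbour check $b_{(i_0+1,j_0)}\oplus b_{(i_0-1,j_0)}\oplus b_{(i_0,j_0+1)}\oplus b_{(i_0,j_0-1)}=0$, and inspecting its extreme occupied row shows $b\equiv 0$. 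Using the purity observation I sweep downward, clearing the top plane's $Z$-content with the generators centred one plane below (which lie in $B$ and disturb only the two planes beneath), until the support is confined to the bottom pair of planes; purity applied from both sides makes this residue pure $Z$, and the forcing step then collapses it to the identity, so $P$ equals the product of the generators used.

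The main obstacle is bookkeeping the support during this reduction so that every generator actually lies in $B\cap\Lambda_{odd}$. Clearing a vertical $Z$ deposits $X$- and $Y$-legs on the lateral neighbours of the cleared qubit, so the support can creep outward by one column per plane descended; for a tall box this lateral growth could in principle demand generators outside $B$, or even wrap around the torus, which would invalidate the forcing step. Controlling this spreading is exactly where the hypothesis $l_\alpha\le L_\alpha-3$ and the contractibility of $B$ in the $3$-torus are essential, and where the dependency structure of Appendix~\ref{sec:appA} (which describes precisely the global relations $\prod_u S_u^{t_u}=\pm I$) must be invoked to re-route the far generators back into $B$. This confinement bookkeeping, rather than any single algebraic identity, is the technical heart of the lemma.
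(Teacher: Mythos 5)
Your second part --- which is where the real content of the lemma lies --- has a genuine gap, and you have flagged it yourself: you leave the lateral spreading of the support during the downward sweep as an unresolved ``confinement bookkeeping'' problem, to be handled by invoking the dependency structure of Appendix~\ref{sec:appA} and the contractibility of $B$. No such machinery is needed, and it would not help anyway (the relations $\prod_u S_u^{t_u}=\pm I$ involve generators spread over the whole lattice and cannot ``re-route'' anything into $B$). The point you are missing is a second consequence of commutativity that kills the creep before it starts: besides your purity observation ($P_u\in\{I,Z\}$ on the top face $F$, forced by generators centered above $F$), commutation with the generators centered just \emph{outside the lateral faces} of $B$ forces $P_u=I$ for every $u$ lying on an edge of $F$. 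Indeed, for such an edge site $u$ the generator $S_{u+\hat{x}}$ (say) touches $B$ only at $u$, where it acts by $X$; combined with $P_u\in\{I,Z\}$ this gives $P_u=I$. Consequently the sites of $F$ that need clearing are strictly interior to $F$, and the clearing stabilizer $S=\prod_{u\in F\,:\,P_u=Z}S_{u-\hat{z}}$ has support entirely inside $B$: its lateral $X$- and $Y$-legs land directly below interior sites of $F$. Hence $PS$ is supported in a box of size $l_x\times l_y\times(l_z-1)$ with the \emph{same} lateral dimensions --- there is no creep at all, and the induction closes. This is exactly the paper's proof, which finishes with a $2\times2\times2$ base case (triples of generators centered outside the box, acting on a single corner qubit by $X$, $Y$, $Z$ and on no other corner, force $P=I$); the hypothesis $l_\alpha\le L_\alpha-3$ is used only so that generators centered one step outside a face have all their remaining legs outside $B$ without wrapping around the torus.

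Your first part is a legitimately different route (translating the half-filled membranes off the box by stabilizer multiplication, then using completeness of the logical algebra), but its scope is too narrow for the lemma as stated. The operators of Eq.~(\ref{logicals}) are constructed, and form a complete anticommuting set, only under the assumptions of Section~\ref{sec:encoding}: $p_x,p_y,p_z$ odd and pairwise co-prime, so $g=1$ and $k=4$. For even $p_\alpha$ the relevant overlaps have even cardinality and the anticommutation fails, and for $g>1$ (where $k=4g>4$) the paper exhibits no complete set of planar logical operators at all, so ``the general $g$ case is identical'' is an unsupported claim. Since the lemma is invoked for arbitrary even lattice dimensions and for the infinite lattice, this matters. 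Note also that in the paper the first assertion is not proved separately: it falls out of the same induction, because the product of the clearing generators reconstructs $P$ up to a phase.
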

\begin{proof}
Let $l=\max{(l_x,l_y,l_z)}$. We prove the lemma using induction
in $l$. The base of induction is $l=2$. Note that a box of size $2\times 2\times 2$
contains four even sites. Using the translational symmetry we can assume that $P$ acts non-trivially on at most four qubits located at sites $u_1=(0,0,0)$, $u_2=(1,1,0)$, $u_3=(1,0,1)$, and $u_4=(0,1,1)$. For any site $u_i$ one can choose a triple of generators $S_v$ acting on $u_i$ by $X$, $Y$, and $Z$, and acting trivially on the remaining three sites $u_j$, $j\ne i$. Hence the commutativity $PS_v=S_vP$ implies that $P=I$ up to  a phase factor.

Let us now prove the step of induction. Assume without loss of generality that
$l_z\ge l_x,l_y$. We can also assume that $l_z\ge 3$ since otherwise we have the base of induction. We shall construct a stabilizer $S\in \calS$ such that $PS$ has support
in a box of size $l_x\times l_y \times (l_z-1)$. This stabilizer $S$ will only use generators
whose support is contained in $B$. Indeed, let $F$ be the upper face of $B$.
The commutation $PS_v=S_vP$ implies that $P_u\in \{I,Z\}$ for all $u\in F$. In addition,
$P_u=I$ iff $u$ lies on an edge of $F$. Define the stabilizer $S$ as
\be
S=\prod_{u\in F\, : \, P_u=Z}\; S_{u-\hat{z}}.
\ee
Note that $S$ has support in $B$ and $PS$ acts trivially on $F$.
Hence $PS$ has support in a box of size  $l_x\times l_y \times (l_z-1)$.
It proves the step of induction.
\end{proof}

A simple corollary of the lemma is that the model Eq.~(\ref{H}) has unique ground state
if defined on an infinite lattice, $\Lambda=\ZZ\times \ZZ\times \ZZ$.
Indeed, define a stabilizer group $\calS^*$ generated by {\em finite} products
of generators $S_u$, $u\in \Lambda_{odd}$.
For any finite subset  $M\subseteq \Lambda_{even}$ let
$\calS(M)$ be the subgroup including all elements $S\in \calS^*$
whose support is contained in $M$. We note  that  $-I\notin \calS(M)$ since otherwise
one would have $-I\in \calS$ for a finite lattice of sufficiently large size which contradicts
to results obtained in Appendix~\ref{sec:appA}. Hence $\calS(M)$ can be regarded as a stabilizer
code by itself. Define a mixed state $\rho_M$ proportional to the projector onto the codespace
of $\calS(M)$,
\[
\rho_M=\frac1{2^{|M|}} \sum_{P\in \calS(M)}\, P.
\]
Note that for any pair of subsets $M'\subseteq M$ the inclusion
$\calS(M')\subseteq \calS(M)$ implies $\rho_{M'} =\trace_{M\backslash M'} \, \rho_M$.
Hence any pair of states $\rho_M$, $\rho_K$ with overlapping supports
have consistent marginal states on the intersection $M\cap K$.
A collection of such states $\rho=\{\rho_M\}$ associated with all finite subsets $M\subseteq \Lambda_{even}$
defines a quantum state of the entire lattice. By construction one has $S_u \rho_M=\rho_M$
for any generator $S_u$ whose support is contained in $M$. Hence $\rho$ minimizes every term in
the Hamiltonian Eq.~(\ref{H}) thus being the ground state of the model. We claim that this state is pure, namely,
$\rho$ cannot be represented as a mixture of two different quantum states.
Indeed, suppose  there exists a pair of quantum states $\tau=\{ \tau_M\}$ and $\eta=\{\eta_M\}$ such that
\be
\label{mixture}
\rho_M=(1/2)(\tau_M+\eta_M)
\ee
for all finite $M\subseteq \Lambda_{even}$. Let us show that this is possible only if $\rho_M=\tau_M=\eta_M$.
Indeed, since $\rho_M$ is proportional to the projector onto the $\calS(M)$-invariant subspace,
we conclude that the range of $\tau_M$  is spanned by $\calS(M)$-invariant states, that is,
\be
\label{range}
P\tau_M =\tau_M P = \tau_M \quad \mbox{for any $P\in \calS(M)$}.
\ee
Choose any $P\in \calS^*$ and let
$K=\mbox{Supp}(P) \cup M$. Consider the identity $P\tau_K P =\tau_K$
which follows from Eq.~(\ref{range}).
Taking a partial trace over $K\backslash M$ we get $P_M \tau_M P_M =\tau_M$,
where $P_M$ is the projection of $P$ onto $M$.  Thus $\tau_M$ commutes with
projections of stabilizer onto $M$.
Hence any Pauli operator $P$ in the Pauli expansion of $\tau_M$ commutes with
projections of stabilizer onto $M$ and thus $P\in \calS(M)$ up to a phase
factor, see Lemma~\ref{lemma:stability}. We arrive at
\[
\tau_M=\frac1{2^{|M|}} \sum_{P\in \calS(M)} a(P)\, P,
\]
for some real coefficients $a(P)$. Using Eq.~(\ref{range})
we conclude that $a(P)=a(I)$ for all $P\in \calS(M)$, that is, $\tau_M=\rho_M$.
The same argument shows that $\eta_M=\rho_M$. Thus we proved that $\tau=\eta=\rho$,
i.e., $\rho$ is a pure state.

\bibliographystyle{unsrt}

\end{document}